\newtheorem{proposition}{Proposition} 
\newtheorem{lemma}{Lemma} 
\newtheorem{theorem}{Theorem}
\newtheorem{corollary}{Corollary}
\newtheorem{definition}{Definition}
\newtheorem{example}{Example}
\newtheorem{claim}{Claim}
\newtheorem{case}{Case}
\numberwithin{subcase}{case}
\newtheorem{remark}{Remark} 
\newenvironment{proof}[1][Proof]{\noindent\textbf{#1.} }{\ \rule{0.5em}{0.5em}}
\let\olddefinition\definition
\renewcommand{\definition}{\olddefinition\normalfont}
\let\oldexample\example
\renewcommand{\example}{\oldexample\normalfont}
\let\oldcase\case
\renewcommand{\case}{\oldcase\normalfont}
\let\oldsubcase\subcase
\renewcommand{\subcase}{\oldsubcase\normalfont}
\let\oldclaim\claim
\renewcommand{\claim}{\oldclaim\normalfont}
\let\oldremark\remark
\renewcommand{\remark}{\oldremark\normalfont}
\begin{document}

\begin{center}
{\Large Domination games played on line graphs of complete multipartite graphs}

\emph{H.G. Tananyan}

{\small Russian-Armenian (Slavonic) University, Yerevan, Armenian}

{\small E-mail: HTananyan@yahoo.com}
\end{center}
\sloppy

\begin{abstract}
The domination game on a graph $G$\ (introduced
by B. Bre\v{s}ar, S. Klav\v{z}ar, D.F. Rall \cite{BKR2010}) consists of two
players, Dominator and Staller, who take turns choosing a vertex from $G$\
such that whenever a vertex is chosen by either player, at least one
additional vertex is dominated. Dominator wishes to dominate the graph in as
few steps as possible, and Staller wishes to delay this process as much as
possible. The game domination number $\gamma _{{\small g}}(G)$\ is the
number of vertices chosen when Dominator starts the game; when Staller
starts, it is denoted by $\gamma _{{\small g}}^{\prime }(G).$

In this paper, the domination game on line graph $L\left( K_{\overline{%
m}}\right) $\ of complete multipartite graph $K_{\overline{m}}$\ $(%
\overline{m}\equiv (m_{1},...,m_{n})\in 
\mathbb{N}
^{n})$\ is considered, the exact values for game domination numbers are
obtained and optimal strategy for both players is described. Particularly,
it is proved that for $m_{1}\leq m_{2}\leq ...\leq m_{n}$ both $\gamma _{%
{\small g}}\left( L\left( K_{\overline{m}}\right) \right) =\min
\left\{ \left\lceil \frac{2}{3}\left\vert V\left( K_{\overline{m}%
}\right) \right\vert \right\rceil ,\right.$\ $\left. 2\max \left\{
\left\lceil \frac{1}{2}\left( m_{1}+...+m_{n-1}\right) \right\rceil ,\text{ }%
m_{n-1}\right\} \right\} -1$ when $n\geq 2$ and $\gamma _{g}^{\prime
}(L\left( K_{\overline{m}}\right) )=\min \left\{ \left\lceil \frac{2}{3%
}\left( \left\vert V(K_{_{\overline{m}}})\right\vert -2\right)
\right\rceil ,\right.$\ $\left. 2\max \left\{ \left\lceil \frac{1}{2}\left(
m_{1}+...+m_{n-1}-1\right) \right\rceil ,\text{ }m_{n-1}\right\} \right\} $
when $n\geq 4$.

\textbf{Keywords.} domination game; game domination number; line graph;
complete multipartite graph; optimal strategy

\textbf{AMS subject classifications.} 05C57, 91A43, 05C69, 05C76
\end{abstract}

\section{Introduction}

We consider only finite undirected\ graphs without loops and multi-edges.
The set of vertices of a graph $G$ is denoted by $V(G)$, and the set of
edges of $G,$ by $E(G)$. For a vertex $v\in V(G)$ the closed vertex
neighborhood is denoted by $N[v]=\{u\in V(G):(v,u)\in E(G)\}\cup \{v\}$ and
for an edge $e\in E(G),$ the closed edge neighborhood by $N[e]=\{e^{\prime
}\in E(G):e\neq ~e^{\prime },e~$and$~e^{\prime }~$are~adjacent in $G\}\cup
\{e\}$. The line graph of a graph $G$, denoted by $L(G)$, is the graph with
vertex set $E(G)$ in which two vertices are adjacent if and only if the
respective edges of $G$ have a vertex in common, i.e. $V(L(G))=E(G)$ and $%
E(L(G))=\{(e_{1},e_{2}):e_{1}\in E(G),e_{2}\in N[e_{1}],e_{1}\neq e_{2}\}$.
A complete graph on $m$ vertices is denoted by $K_m$, and a complete $n-$%
partite ($n\geq 2$) graph with partite classes $V_{1},V_{2},...V_{n}$ of
order $m_{1},m_{2},...,m_{n}$ respectively is denoted by $K_{_{\overline{m}%
}} $, where $\overline{m}=(m_{1},...,m_{n})$. Non-defined concepts can be
found in \cite{Harary1969}.

According to the terminology of \cite{BKR2010}-\cite{BKR2011}, we describe
two \textit{vertex domination games} and their edge-analogs played on a
finite graph $G$. In $Game$ $\mathfrak{D}_{v}$ two players, \textit{Dominator%
} and \textit{Staller}, alternate taking turns choosing a vertex from $G,$
with Dominator going first. Let $S$ denote the sequence of vertices $%
s_{1}s_{2}...$ chosen by the players. These vertices must be chosen in such
a way that whenever a vertex is chosen by either player, at least one
additional vertex of the graph $G$ is dominated that was not dominated by
the vertices previously chosen. That is, for each $i$:\qquad 
\begin{equation}
N[s_{i}]\backslash \overset{i-1}{\underset{j=1}{\bigcup }}N[s_{j}]\neq
\varnothing \qquad (1<i\leq |S|).  \label{eq:game_condition}
\end{equation}

In $Game~\mathfrak{D}_{v}^{\prime }$ the players alternate choosing vertices
satisfying to condition (\ref{eq:game_condition}) as in $Game$ $\mathfrak{D}%
_{v}$, except that Staller begins. Since the graph $G$ is finite, each of
the defined games will end in some finite number of moves regardless of how
the vertices are chosen. In each of the games, Dominator chooses vertices
using a strategy that will force the game to end in the fewest number of
moves, and Staller uses a strategy that will prolong the game as long as
possible. Following \cite{BKR2010}, we define the \textit{vertex game
domination number} of $G$, denoted by $\gamma _{g}(G)$, and the
Staller-start vertex game domination number of $G$, denoted by $\gamma
_{g}^{\prime }(G)$, to be the total number of vertices chosen when they play
respectively $Game$ $\mathfrak{D}_{v}$ and $Game$ $\mathfrak{D}_{v}^{\prime
} $ on graph $G$ using \textit{optimal strategies}.

In the Dominator-start \textit{edge domination game}, denoted by $Game$ $%
\mathfrak{D}_{e}$, and in the Staller-start edge domination game, denoted by 
$Game$ $\mathfrak{D}_{e}^{\prime }$, Dominator and Staller are taking edges
instead, under the condition (\ref{eq:game_condition}) where $%
S=s_{1}s_{2}...s_{|S|}$ is a sequence of chosen edges. Analogously, the 
\textit{edge game domination number} of $G$, denoted by $\gamma _{e,g}(G)$,
and the Staller-start edge game domination numbers of $G$, denoted by $%
\gamma _{g}^{\prime }(G)$, are the total numbers of edges chosen when they
play respectively $Game$ $\mathfrak{D}_{e}$ and $Game$ $\mathfrak{D}%
_{e}^{\prime }$ on graph $G$ using optimal strategies.

\begin{remark}
\label{rem:0.1} From definitions it immediately follows that $\gamma
_{g}(L(G))=\gamma _{e,g}(G)$ and $\gamma _{g}^{\prime }(L(G))=\gamma
_{e,g}^{\prime }(G)$ for every graph $G$.
\end{remark}

A set of covered vertices, denoted by $C_{S,i}$, at step $i$ ($1\leq i\leq
|S|$) in an instance $S=s_{1}s_{2}...s_{|S|}$ of $Game$ $\mathfrak{D}_{e}$
played on a graph $G$ is defined as a union of endpoints of chosen edges $%
s_{1},s_{2},...,s_{i}$. A vertex $v\in V(G)$ is called uncovered in $S$ at
step $i$ ($1\leq i\leq |S|$) if $v\not\in C_{S,i}$. For short, put $%
C_{S}=C_{S,|S|}$ and $C_{S,0}=\varnothing $.

In Section \ref{sec:Preliminaries}, helper properties for edge domination
games are given. In Section \ref{sec:K_m}, the game domination number when
at the end of the game at most one uncovered vertex remains is obtained and
as a corollary exact value of $\gamma _{g}(L(K_{m}))$ is calculated. In
Section \ref{sec:K_m_vector}, an semi-greedy strategy for Staller for edge
domination game played on complete multipartite graph is introduced. Through
that strategy, the lower bound for domination number, when at the end of the
game at least two uncovered vertices is left, is determined. Then from the
equality of the obtained upper and lower bounds, by using result from
Section \ref{sec:K_m}, game domination number $\gamma _{g}(L(K_{_{\overline{m%
}}}))$ is obtained, and the optimality of semi-greedy strategy for Staller
is shown. In Section \ref{sec:K_m_vector_Staller-start}, Staller-start game
domination number $\gamma _{g}^{\prime }(L(K_{_{\overline{m}}}))$ is
determined.

\section{Preliminaries and Basic Properties}

\label{sec:Preliminaries}

Following \cite{KWZ2011}, we use the following definitions. Let $G$ be a
graph on which several turns of the edge domination game have already been
taken. We say that a edge $e$ of $G$ is \textit{dominated} if some edge
within $N[e] $ has been played. A \textit{partially} edge dominated graph $%
G_{A}$ is a graph $G$ in which we suppose that some edges $A\subseteq E(G)$
have already been dominated, i.e. some moves have already been made,
although we are concerned with which edges have thus far been dominated,
rather than which have been chosen. If $G_{A}$ is a partially edge dominated
graph, then let $\gamma _{e,g}(G_{A})$ denote the number of turns remaining
in the game if Dominator has the next move. Similarly, let $\gamma
_{e,g}^{\prime }(G_{A})$ denote the number of turns remaining if Staller has
the next move.

On the basis of Remark \ref{rem:0.1}, the Continuation Principle (see \cite%
{KWZ2011}, Lemma 2.1) can be verbatim rewritten for partially edge dominated
graphs.

\begin{proposition}[Continuation Principle]
\label{rem:cont_princ} Let $G$ be a graph and $A\subseteq B\subseteq E(G).$
If $G_{A}$ and $G_{B}$ are the partially edge dominated graphs corresponding
to $G,$ with $A$ dominated and with $B$ dominated respectively, then $\gamma
_{e,g}(G_{A})\geq \gamma _{e,g}(G_{B})$ and $\gamma _{e,g}^{\prime
}(G_{A})\geq \gamma _{e,g}^{\prime }(G_{B}).$
\end{proposition}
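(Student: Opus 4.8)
The plan is to obtain the statement for free from Remark~\ref{rem:0.1} and the vertex Continuation Principle \cite[Lemma~2.1]{KWZ2011}. By Remark~\ref{rem:0.1}, playing $Game~\mathfrak{D}_{e}$ (respectively $Game~\mathfrak{D}_{e}^{\prime}$) on $G$ is literally the same as playing $Game~\mathfrak{D}_{v}$ (respectively $Game~\mathfrak{D}_{v}^{\prime}$) on $L(G)$: the edges of $G$ chosen by the players are precisely the vertices of $L(G)$, and since the closed edge neighbourhood $N[e]$ in $G$ coincides with the closed vertex neighbourhood of $e$ in $L(G)$, the legality requirement \eqref{eq:game_condition} is the same statement under either reading. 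Hence a set $A\subseteq E(G)$ of dominated edges of $G$ is exactly a set $A\subseteq V(L(G))$ of dominated vertices of $L(G)$, the partially edge dominated graph $G_{A}$ is the partially (vertex) dominated graph $L(G)_{A}$, and $\gamma_{e,g}(G_{A})=\gamma_{g}(L(G)_{A})$ as well as $\gamma_{e,g}^{\prime}(G_{A})=\gamma_{g}^{\prime}(L(G)_{A})$. Applying \cite[Lemma~2.1]{KWZ2011} to the graph $L(G)$ and the sets $A\subseteq B\subseteq V(L(G))$ yields both inequalities at once. No genuine obstacle arises here; it is the verification that the two game descriptions coincide, already packaged in Remark~\ref{rem:0.1}, that does all the work.

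Should a self-contained argument be preferred, I would reproduce the proof of \cite[Lemma~2.1]{KWZ2011} in edge language. It suffices to treat $B=A\cup\{f\}$ for a single edge $f$ and then iterate over the finitely many edges of $B\setminus A$. To show $\gamma_{e,g}(G_{A})\geq\gamma_{e,g}(G_{B})$, I would let Dominator play the \emph{real} game on $G_{B}$ while privately running an \emph{imagined} game on $G_{A}$ in which he follows a Dominator strategy that is optimal against every Staller. He copies each imagined Dominator move into the real game and each real Staller move back into the imagined game. A Staller move legal in the real game is automatically legal in the imagined one, so that copy never fails; only an imagined Dominator move may fail to dominate anything new in the real game, in which case Dominator instead makes an arbitrary legal real move, one of which exists since the real game is still in progress. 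The invariant to maintain is that the set of edges dominated so far in the real game on $G_{B}$ contains the set dominated so far in the imagined game on $G_{A}$: it holds initially because $B\supseteq A$, it survives every copied Staller move, and it survives a non-copied imagined Dominator move $d$ because $N[d]$ is then already contained in the real dominated set. Consequently the real game finishes no later than the imagined game, which lasts at most $\gamma_{e,g}(G_{A})$ moves, and so $\gamma_{e,g}(G_{B})\leq\gamma_{e,g}(G_{A})$. For $\gamma_{e,g}^{\prime}$ one uses the same coupling with Staller moving first in both games.

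The only delicate point in the self-contained version is precisely this bookkeeping of the invariant across a non-copied Dominator move, together with the observation that feeding the imagined game whatever Staller replies the coupling produces does no harm, because a strategy optimal against all Staller play still caps the imagined game at $\gamma_{e,g}(G_{A})$ moves. Everything else is routine, and in practice I would simply invoke \cite[Lemma~2.1]{KWZ2011} through Remark~\ref{rem:0.1} as in the first paragraph.
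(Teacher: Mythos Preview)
Your proposal is correct and matches the paper's approach exactly: the paper does not give an independent proof but simply remarks that, on the basis of Remark~\ref{rem:0.1}, the Continuation Principle of \cite[Lemma~2.1]{KWZ2011} can be verbatim rewritten for partially edge dominated graphs. Your first paragraph is precisely this reduction, and the self-contained imagination-strategy argument you add is a faithful transcription of the original proof into edge language, so nothing is missing.
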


\begin{proposition}
\label{prop:indep} Let $S$ be an instance of $Game$ $\mathfrak{D}_{e}$
played on a graph $G$. Then the vertices of
the set $V(G)\backslash C_{S,i}$ $(1 \leq i \leq |S|)$ are independent in $G$ if 
and only if game $S$ is over, i.e. $i = |S|$.
\end{proposition}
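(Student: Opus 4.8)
The plan is to reduce the statement to a single bookkeeping observation: at any step $i$, an edge $e=(u,v)$ of $G$ is dominated if and only if at least one of $u,v$ belongs to $C_{S,i}$. I would prove this directly from the definitions. If $e$ is dominated, then some played edge $s_j$ (with $j\le i$) lies in $N[e]$; then either $s_j=e$, so both endpoints of $e$ are covered, or $s_j$ is adjacent to $e$, so $s_j$ and $e$ share a vertex, which therefore belongs to $C_{S,i}$. Conversely, if, say, $u\in C_{S,i}$, then some played $s_j$ is incident to $u$; if $s_j=e$, then the played edge $e$ itself lies in $N[e]$, and if $s_j\ne e$, then $s_j$ shares the vertex $u$ with $e$, hence $s_j\in N[e]$; in either case $e$ is dominated. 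The two mild points to keep in view are that $N[e]$ contains $e$ itself and that $a\in N[b]$ holds exactly when $b\in N[a]$ for any two edges $a,b$, so that ``$e$ is dominated after step $i$'' means precisely $e\in\bigcup_{j\le i}N[s_j]$. The consequence I would single out is this: the edges of $G$ not yet dominated after step $i$ are exactly the edges of the subgraph of $G$ induced by the uncovered vertices $V(G)\setminus C_{S,i}$.

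Next I would record that a legal move is available at step $i+1$ if and only if some edge of $G$ is still undominated after step $i$. One direction is immediate from rule~(\ref{eq:game_condition}): any move dominates a previously undominated edge, so if every edge is already dominated, no move is possible. For the converse, if $e$ is undominated after step $i$, then playing $e$ is legal, since $e\in N[e]$ while $e\notin\bigcup_{j\le i}N[s_j]$, so $N[e]\setminus\bigcup_{j\le i}N[s_j]\ne\varnothing$.

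Finally I would assemble the equivalence. Chaining the two observations, the set $V(G)\setminus C_{S,i}$ is independent in $G$ exactly when the subgraph it induces has no edge, exactly when no edge of $G$ is undominated after step $i$, exactly when no legal move exists at step $i+1$, that is, exactly when the game cannot proceed past step $i$. Since by definition the players keep moving precisely while a legal move exists and the game halts after $|S|$ moves, for $i\le|S|$ the condition ``no legal move at step $i+1$'' is equivalent to $i=|S|$; conversely, if $i<|S|$, then $s_{i+1}$ is an actual legal move, which by the second observation witnesses an undominated edge and hence, by the first, an edge with both endpoints in $V(G)\setminus C_{S,i}$, so that set is not independent. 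This gives both implications. I do not expect a genuine obstacle here: all the content sits in the first observation, and the main thing to get right is the symmetry of edge-adjacency together with the convention $e\in N[e]$; once those are handled, the rest is immediate.
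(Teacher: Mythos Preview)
Your proof is correct and follows essentially the same approach as the paper's own proof: both hinge on the observation that an edge is dominated after step $i$ precisely when at least one of its endpoints lies in $C_{S,i}$, so the undominated edges are exactly those inside $V(G)\setminus C_{S,i}$. The paper's version is much terser, simply asserting each direction in one line, while you spell out the equivalence ``dominated $\Leftrightarrow$ an endpoint is covered'' and the equivalence ``legal move exists $\Leftrightarrow$ some edge undominated'' explicitly; but the logical content is the same.
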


\begin{proof}
If $V(G)\backslash C_{S,i}$ ($1 \leq i \leq |S|$) is independent in $G$ then
all edges of $G$ are dominated and game $S$ is over, i.e. $i = |S|$.

If $v_{1},v_{2}\in V(G)\backslash C_{S,i}$ ($1 \leq i \leq |S|$) and 
$(v_{1},v_{2})\in E(G)$ then, since at step $i$ there
are no chosen edges adjacent to either $v_{1}$ or $v_{2}$, edge
$(v_{1},v_{2})$ is not dominated at step $i$. So, $i < |S|$.
Thus, $V(G)\backslash C_{S,|S|}$ is independent in $G$.
\end{proof}

\begin{proposition}
\label{prop:D_2} For every graph $G$ there exists an optimal strategy $%
\mathfrak{S}$ for $Game$ $\mathfrak{D}_{e}$ played on $G$ such that at each
step Dominator chooses an edge which covers exactly two new vertices, i.e.
for an arbitrary instance $S$ of $Game$ $\mathfrak{D}_{e}$ played on $G$
with strategy $\mathfrak{S}$ and for each odd $i$ ($1 \leq i \leq |S|$), $%
|C_{S,i}\backslash C_{S,i-1}|=2$.
\end{proposition}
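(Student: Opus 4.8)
The plan is to show that any optimal Dominator strategy can be modified, move by move, into one in which every Dominator move covers exactly two new vertices, without ever increasing the length of the game. The key observation is a local exchange argument: suppose at some odd step $i$ Dominator is about to play according to an optimal strategy $\mathfrak{S}_0$ and the prescribed edge $s_i$ covers fewer than two new vertices --- i.e. it covers zero or one new vertex (it must cover at least one by the game condition~(\ref{eq:game_condition}) when $i>1$; the very first move covers two new vertices automatically since $C_{S,0}=\varnothing$, so only the case of exactly one new vertex is genuinely at issue). I would argue that in this situation Dominator can instead play an edge $s_i'$ covering two new vertices, and that this replacement can only help: by the Continuation Principle (Proposition~\ref{rem:cont_princ}), dominating a superset of edges never increases the remaining game length for either player, so after the better move the position is at least as good for Dominator.

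First I would set up the exchange precisely. At step $i$ (odd), let $U=V(G)\setminus C_{S,i-1}$ be the set of currently uncovered vertices. By Proposition~\ref{prop:indep}, since the game is not yet over, $U$ is not independent, so there is an edge $e=(u,v)$ with $u,v\in U$; playing $e$ covers exactly two new vertices and is a legal move (it dominates $e$ itself, which was undominated). If $\mathfrak{S}_0$'s move $s_i$ already covers two new vertices, keep it; otherwise replace it by such an $e$. Let the resulting partially-dominated position be $G_B$ and the position after $s_i$ be $G_A$; I want $\gamma_{e,g}(G_B)\le \gamma_{e,g}(G_A)$ so that the modified strategy is still optimal. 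This is the step where I expect the main subtlety: the Continuation Principle compares a position where $A$ is dominated with one where $B\supseteq A$ is dominated, but it is not automatic that $N[e]$, the set of edges dominated by playing $e$, contains $N[s_i]$. So the clean "superset" hypothesis of Proposition~\ref{rem:cont_princ} need not hold directly.

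To get around this I would not insist on a single fixed $e$; instead I would let Dominator's modified strategy, whenever $\mathfrak{S}_0$ prescribes a one-new-vertex move $s_i=(u,w)$ with $u\in U$ and $w\in C_{S,i-1}$, choose a two-new-vertex edge incident with $u$ if one exists among $\{(u,v):v\in U\}$, and otherwise fall back on $s_i$ --- but then observe that if no such edge exists, $u$ has no uncovered neighbour other than possibly through vertices already handled, and a short case analysis on the structure of $U$ (using again Proposition~\ref{prop:indep} to locate where the remaining undominated edges live) shows the one-new-vertex move was forced and in fact only finitely many such moves can occur; more carefully, the standard way to make this rigorous is the "imagination strategy" / pairing argument from \cite{KWZ2011}: Dominator imagines playing according to $\mathfrak{S}_0$ on a parallel board while actually playing two-new-vertex edges, maintaining the invariant that the real board's dominated set contains the imagined board's dominated set. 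Each real move dominates at least as much as the corresponding imagined move (because a two-new-vertex edge $(u,v)$ with $u,v\in U$ dominates every edge incident to $u$ or $v$, and one can always pick $v$ so that this set contains the imagined move's newly dominated edge when that imagined move also touched $U$), so by induction the real game ends no later than the imagined optimal game, i.e. in at most $\gamma_{e,g}(G)$ moves.

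Finally I would tie it together: the real game played by this modified strategy has all of Dominator's moves covering exactly two new vertices by construction, its length is at most $\gamma_{e,g}(G)$, and since $\gamma_{e,g}(G)$ is by definition the minimum Dominator can force, the length equals $\gamma_{e,g}(G)$; hence the modified strategy is optimal and has the desired property, so the optimal $\mathfrak{S}$ claimed in the statement exists. The hard part, as noted, is making the domination-domination of real vs. imagined moves watertight --- i.e. verifying that a two-new-vertex edge can always be chosen to dominate everything the imagined one-new-vertex (or two-new-vertex) move would dominate; I would handle this by choosing the two uncovered endpoints greedily to include the relevant uncovered endpoint of the imagined move whenever that move has one, and noting that when the imagined move covers no new vertex its newly dominated edges form a subset of those already dominated on the real board by the induction hypothesis.
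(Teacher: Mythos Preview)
Your overall plan---local exchange plus the Continuation Principle---is exactly the paper's, and you correctly pinpoint the only nontrivial issue: after replacing Dominator's move $s_i$ by a two-new-vertex edge, one must check that the new set of dominated edges contains the old one so that Proposition~\ref{rem:cont_princ} applies. Where you diverge is in how to choose the replacement edge, and this is where you make the problem harder than it is.

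The paper's observation that you are missing is this: do not reach for an arbitrary edge inside the uncovered set $U$, nor for an imagination-strategy argument. Simply take the replacement to be any edge $s_i'$ that is \emph{newly dominated} by $s_i$ (such an edge exists by the game rule~(\ref{eq:game_condition})). Because $s_i'$ was undominated before step $i$, no played edge meets either endpoint of $s_i'$, so both endpoints of $s_i'$ are uncovered. Because $s_i'\in N[s_i]$, the edges $s_i=(u,w)$ and $s_i'$ share a vertex; it cannot be the covered vertex $w$ (else $s_i'$ would already be dominated), so they share the uncovered vertex $u$. Now every edge incident to the covered vertex $w$ is already in the dominated set $E_i$, hence $E_i\cup N[s_i]\subseteq E_i\cup N[s_i']$, and the Continuation Principle applies directly. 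In particular, a two-new-vertex edge incident with $u$ \emph{always} exists, so your ``otherwise fall back on $s_i$'' branch is empty and the whole imagination-strategy detour is unnecessary.

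As written, your proposal is not yet a proof: you explicitly flag the superset verification as ``the hard part'' and leave it to a sketchy pairing argument (which also stumbles on a phantom ``imagined move covers no new vertex'' case---that case cannot occur, since any legal move covers at least one new vertex). Once you plug in the choice $s_i'=\text{a newly dominated edge}$, the hard part disappears and the argument collapses to the paper's two-line proof.
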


\begin{proof}
Let at step $i$ ($1<i\leq \gamma _{e,g}(G))$ edges $E_{i}\subset E(G)$ are
dominated and Dominator by playing with an optimal strategy on move $i$
chooses edge $s_{i}$ which (by definition) dominating at least one new edge $%
s_{i}^{\prime }$. If edge $s_{i}$ covers two new vertices then in strategy $%
\mathfrak{S}$ Dominator will also choose $\ s_{i}$, otherwise Dominator will
choose edge $s_{i}^{\prime }$ instead of edge $s_{i}$, and since in that
case $E_{i}\cup N[s_{i}]\subseteq E_{i}\cup N[s_{i}^{\prime }]$, due to the\
Continuation Principle (see Propositions \ref{rem:cont_princ}), $\mathfrak{S}
$ is also optimal strategy.
\end{proof}

Let $dist(v,u)$ be the distance between vertices $v,u\in V(G)$. The
vertex-edge diameter of a connected graph $G$ (with $E\neq \varnothing $)
denoted by $diam(G)$ is defined as:%
\begin{equation}
diam(G)\equiv \underset{w\in V(G)}{\underset{(v,u)\in E(G)}{\max }}\min
\{dist(w,v),dist(w,u)\}.  \label{eq:vertex-edge-diameter}
\end{equation}

A strategy $\mathfrak{S}$ for $Game$ $\mathfrak{D}_{e}$ is called a \textit{%
2-1-}strategy if on each move, Dominator covers exactly two new vertices and
Staller covers exactly one, i.e. for an arbitrary instance $S$ of $Game$ $%
\mathfrak{D}_{e}$ played on $G$ with strategy $\mathfrak{S}$ and for each $i$
($1 \leq i \leq |S|$) both $|C_{S,i}\backslash C_{S,i-1}|=2$ when $i$ is odd
and $|C_{S,i}\backslash C_{S,i-1}|=1$ when $i$ is even.

\begin{proposition}
\label{prop:S_1} For every connected graph $G$ if $diam(G)=1$ then there
exists an optimal \textit{2-1-}strategy $\mathfrak{S}$ for $Game$ $\mathfrak{%
D}_{e}$ played on $G.$
\end{proposition}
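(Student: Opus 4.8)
The plan is to build $\mathfrak{S}$ on top of Proposition~\ref{prop:D_2}. Fix an optimal strategy for $Game$ $\mathfrak{D}_{e}$ on $G$ under which Dominator covers exactly two new vertices on every move, and call it $\mathfrak{S}$; such a strategy exists by Proposition~\ref{prop:D_2}. It then suffices to show that, when Dominator follows $\mathfrak{S}$, an optimal Staller covers exactly one new vertex on each of her turns. I would first record that every legal move covers at least one new vertex: once a vertex $v$ becomes covered, the chosen edge at $v$ already dominates every edge incident to $v$, so an edge with both endpoints covered dominates nothing new and cannot be played. Hence the entire content is to rule out that Staller profitably covers two new vertices.

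The crucial step is a structural claim about $G$, and this is where $diam(G)=1$ is used: whenever it is Staller's turn and the game is not over, Staller has a legal move covering exactly one new vertex. To see this, recall that $diam(G)=1$ means $w\in N[v]\cup N[u]$ for every edge $(v,u)$ and every vertex $w$. Since the game is not over, Proposition~\ref{prop:indep} gives uncovered vertices $w,z$ with $(w,z)\in E(G)$, and since Dominator has already moved, some vertex $a$ is covered. Applying the condition to the edge $(w,z)$ and the vertex $a$ gives $a\in N[w]\cup N[z]$; as $a\notin\{w,z\}$, the vertex $a$ is adjacent to one of $w,z$, say to $w$. Then $(a,w)\in E(G)$ is a legal move (it dominates the so-far-undominated edge $(w,z)$) covering exactly the single new vertex $w$.

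Next I would show that covering one new vertex is in fact optimal for Staller, via the Continuation Principle. Playing an edge $(x,y)$ newly dominates exactly the previously-undominated edges incident to $x$ or to $y$. So if Staller has a legal move $(x,y)$ covering two new vertices, then by the argument above (applied to the edge $(x,y)$) some covered vertex $a$ is adjacent to one of them, say $x$, and the move $(a,x)$ is legal (it dominates $(x,y)$), covers only the new vertex $x$, and newly dominates only the undominated edges at $x$ --- a subset of what $(x,y)$ newly dominates. Thus the partially edge dominated graph reached after $(a,x)$ has no more dominated edges than the one reached after $(x,y)$, so Proposition~\ref{rem:cont_princ} gives that at least as many moves remain after $(a,x)$ as after $(x,y)$; since Dominator moves next and Staller wishes to prolong the game, $(a,x)$ is at least as good for Staller as $(x,y)$. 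Hence at each of her turns Staller has an optimal reply covering exactly one new vertex, and playing such replies against $\mathfrak{S}$ gives a run in which, for every $i$, move $i$ covers two new vertices when $i$ is odd and one when $i$ is even; that is, $\mathfrak{S}$ is an optimal 2-1-strategy.

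The step I expect to be the main obstacle is the last one --- promoting ``Staller can keep covering one new vertex'' to ``Staller should keep covering one new vertex''. This is exactly where both ingredients are needed: $diam(G)=1$ forces a one-new-vertex move to exist next to any undominated edge, while the Continuation Principle ensures that passing to such a move never shortens the remaining game. One more point to be careful about is the clause ``arbitrary instance'' in the definition of a 2-1-strategy: Staller may also have optimal replies covering two new vertices (when the extra dominated edges do not happen to shorten the game), so strictly speaking the argument yields an optimal Dominator strategy together with an optimal Staller counter-play whose joint run is of 2-1 type, which is what the subsequent counting needs.
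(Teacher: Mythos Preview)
Your argument is correct and follows essentially the same route as the paper: Dominator's part comes from Proposition~\ref{prop:D_2}, and for Staller you use $diam(G)=1$ to find a covered vertex adjacent to an endpoint of her intended edge, then invoke the Continuation Principle to replace a two-new-vertex move by a one-new-vertex move without shortening the game. Your closing caveat about ``arbitrary instance'' is a fair reading of the definition, but in the paper $\mathfrak{S}$ is treated as prescribing both players' moves (the proof explicitly says ``in strategy $\mathfrak{S}$ Staller will also choose $s_i'$''), so the issue does not arise.
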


\begin{proof}
Let Dominator plays with strategy $\mathfrak{S}$ as dercrabed in proof of
Propositions \ref{prop:D_2}. Let $E_{i}\subset E(G)$ be dominated edges at
step $i$ ($1<i\leq \gamma _{e,g}(G))$, let $v$ be a previously covered
vertex and let Staller by playing with an optimal strategy on move $i$
chooses edge $s_{i}=(u,w)$. Since $diam(G)=1$ either $(v,u)\in E(G)$ or $%
(v,w)\in E(G)$. If $s_{i}$ covers one new vertex then in strategy $\mathfrak{%
S}$ Staller will also choose $s_{i}$, otherwise Staller will choose edge $%
s_{i}^{\prime }$ (either $s_{i}^{\prime }=(v,u)$ if $(v,u)$ $\in E(G)$ or $%
s_{i}^{\prime }=(v,w)$ if $(v,w)$ $\in E(G)$) instead of edge $s_{i}$, and
since $E_{i}\cup N[s_{i}^{\prime }]\subseteq $ $E_{i}\cup N[s_{i}]$, due\ to
Continuation Principle (see Proposition \ref{rem:cont_princ}), $\mathfrak{S}$
is also optimal strategy.
\end{proof}

\begin{proposition}
\label{prop:even} Let $\gamma _{e,g}(G)$ be even. If there is an instance $S$
of $Game$ $\mathfrak{D}_{e}$ played on a graph $G$ with a \textit{2-1-}%
strategy then at the end of the game the number of uncovered vertices $%
V(G)/C_{S}$ is not less than $1$, i.e. $|V(G)|-|C_{S}|\geq 1$.
\end{proposition}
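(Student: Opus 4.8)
The claim is a parity/counting statement: if the game length $\gamma_{e,g}(G)$ is even and some instance $S$ follows a 2-1-strategy, then at least one vertex of $G$ remains uncovered. First I would count the number of covered vertices directly from the structure of a 2-1-strategy. In such a strategy, on each odd move exactly two new vertices are covered and on each even move exactly one new vertex is covered, so after $i$ moves we have $|C_{S,i}| = |C_{S,0}| + \sum_{j=1}^{i}|C_{S,j}\setminus C_{S,j-1}|$, which for even $i=2k$ equals $3k$ and for odd $i=2k+1$ equals $3k+2$. Since $\gamma_{e,g}(G)$ is even, write $\gamma_{e,g}(G)=2k$; then $|C_S| = |C_{S,2k}| = 3k = \tfrac{3}{2}\gamma_{e,g}(G)$, so in particular $|C_S|$ is divisible by $3$.

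The second ingredient is a lower bound on $|V(G)|$ forcing a strict inequality. Since the game must actually take $2k$ moves — each genuine move dominating at least one new edge and hence being possible only while some edge remains undominated — I would argue that the game did not terminate prematurely. The key point is that a 2-1-strategy never covers all of $V(G)$ exactly at an even step while the game is still legitimately in progress in a way that would make $|V(G)|=|C_S|$: more precisely, I would invoke Proposition~\ref{prop:indep}, which says the game is over at step $i$ precisely when $V(G)\setminus C_{S,i}$ is independent in $G$. If at the end we had $|V(G)|=|C_S|$, i.e. $V(G)\setminus C_S=\varnothing$, that is vacuously independent — so independence alone does not immediately forbid it. Instead I would use the last move: move $\gamma_{e,g}(G)=2k$ is an \emph{even} move (Staller's move under a 2-1-strategy indexing, or more precisely a move at which exactly one new vertex is covered), and it is a legal move, so just before it, at step $2k-1$, the set $V(G)\setminus C_{S,2k-1}$ was \emph{not} independent — there were two uncovered adjacent vertices $v_1,v_2$. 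The move $s_{2k}$ covers exactly one new vertex, hence covers at most one of $v_1,v_2$; therefore at least one of $v_1,v_2$ remains uncovered after move $2k$, giving $|V(G)|-|C_S|\geq 1$.

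The main obstacle, and the step to handle carefully, is the edge case where the game has length $0$ (vacuously even) — there $|C_S|=0$ and one needs $|V(G)|\geq 1$, which holds for any nonempty graph; and the graph is assumed connected with $E\neq\varnothing$ in the surrounding context, so this is fine. A second subtlety is making sure the "last move covers exactly one new vertex" assertion is exactly what the definition of a 2-1-strategy gives at an even index: since $\gamma_{e,g}(G)$ is even, the final move has even index, and by definition $|C_{S,\gamma_{e,g}(G)}\setminus C_{S,\gamma_{e,g}(G)-1}|=1$, so the argument applies verbatim. I would then just assemble: legality of the last move $\Rightarrow$ two adjacent uncovered vertices before it (Proposition~\ref{prop:indep}) $\Rightarrow$ after covering only one new vertex at least one of them survives $\Rightarrow$ $|V(G)|-|C_S|\geq 1$. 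The counting identity $|C_S|=\tfrac32\gamma_{e,g}(G)$ is worth recording as it is what makes this proposition useful later (it pins down $|C_S|\bmod 3$), even though the inequality itself follows from the last-move argument alone.
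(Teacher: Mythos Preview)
Your argument is correct and follows essentially the same route as the paper: the last move has even index, hence covers exactly one new vertex by the definition of a \textit{2-1-}strategy; by Proposition~\ref{prop:indep} the game being unfinished at step $|S|-1$ forces at least two uncovered (in fact adjacent) vertices at that moment; adding one covered vertex leaves at least one uncovered. The paper phrases the final step as a pure count ($|C_{S,|S|-1}|\le |V(G)|-2$, hence $|C_S|\le |V(G)|-1$) rather than tracking the specific pair $v_1,v_2$, and it does not separately record the identity $|C_S|=\tfrac{3}{2}\gamma_{e,g}(G)$, but these are cosmetic differences.
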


\begin{proof}
Since $S$ played with a \textit{2-1-}strategy and the last move was made by
Staller (because of $\gamma _{e,g}(G)$ is even), then on the last move
exactly one new vertex is covered, i.e. $|C_{S,\gamma
_{e,g}(G)}|=|C_{S,\gamma _{e,g}(G)-1}|+1$. Since $Game$ $\mathfrak{D}_{e}$ \
is not over at step $\gamma _{e,g}(G)-1,$ then due to Proposition \ref%
{prop:indep} $|C_{S,\gamma _{e,g}(G)-1}|\leq |V(G)|-2.$ Thus, $|C_{S,\gamma
_{e,g}(G)}|\leq |V(G)|-1.$
\end{proof}

\begin{proposition}
\label{prop:S_ineq} Let $S$ be an instance of $Game$ $\mathfrak{D}_{e}$
played on a graph $G$ with an optimal \textit{2-1-}strategy and let $%
S^{\prime }$ be an instance played on $G$ with a \textit{2-1-}strategy such
that Dominator plays optimally. Then 
\begin{equation}
|C_{S^{\prime }}|\leq |C_{S}|.  \label{eq:S_ineq}
\end{equation}
\end{proposition}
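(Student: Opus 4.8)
The plan is to separate the counting from the game theory. First I would record that for \emph{any} instance $T$ of $Game$ $\mathfrak{D}_{e}$ played with a $2$-$1$-strategy, the size $|C_{T}|$ is determined solely by the length $|T|$: Dominator's moves are exactly the odd-numbered ones and each contributes two new covered vertices, while Staller's moves are the even-numbered ones and each contributes exactly one, so
\[
|C_{T}|=2\Big\lceil \tfrac{|T|}{2}\Big\rceil+\Big\lfloor \tfrac{|T|}{2}\Big\rfloor .
\]
As a function of $|T|$ this is strictly increasing (its values are $2,3,5,6,8,9,\dots$ for $|T|=1,2,3,4,5,6,\dots$), and it is the \emph{same} function for $S$ and for $S'$, since both are $2$-$1$ instances. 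Hence (\ref{eq:S_ineq}) will follow once we show $|S'|\le |S|$.

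The second step identifies $|S|$. Since $S$ is an instance played with an \emph{optimal} $2$-$1$-strategy — such a strategy exists in the situations where the proposition is applied, e.g.\ by Proposition \ref{prop:S_1} when $diam(G)=1$ — both players are playing optimally there, so by the definition of the edge game domination number $|S|=\gamma _{e,g}(G)$. The third step bounds $|S'|$. In $S'$ Dominator plays optimally, and by Proposition \ref{prop:D_2} he may do this while covering exactly two new vertices on each of his moves, so the requirement that the whole instance $S'$ conform to a $2$-$1$-strategy is compatible with optimal play by Dominator. An optimal Dominator strategy forces the game to end within $\gamma _{e,g}(G)$ moves against \emph{every} line of play by Staller, in particular against the one-new-vertex responses imposed by the $2$-$1$-regime; thus $|S'|\le \gamma _{e,g}(G)$.

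Combining the three steps, $|S'|\le \gamma _{e,g}(G)=|S|$, and feeding this into the monotone function displayed above gives $|C_{S'}|\le |C_{S}|$, which is (\ref{eq:S_ineq}). I do not anticipate a real obstacle: the only delicate point is the mutual consistency of the hypotheses on $S'$ (Dominator optimal and the instance $2$-$1$ at the same time), and this is exactly what Propositions \ref{prop:D_2} and \ref{prop:S_1} are set up to guarantee; everything else is the bookkeeping shown in the first paragraph.
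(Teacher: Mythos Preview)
Your proposal is correct and follows essentially the same approach as the paper: deduce $|S'|\le |S|$ from optimality (Dominator optimal in both, Staller optimal in $S$ gives $|S|=\gamma_{e,g}(G)$ and $|S'|\le\gamma_{e,g}(G)$), then use that $|C_T|$ is determined by, and monotone in, $|T|$ for any $2$-$1$ instance. The paper states these two steps in one sentence each without writing out the explicit formula for $|C_T|$; your discussion of the consistency of the hypotheses on $S'$ is extra commentary not needed for the proof itself, since the proposition simply assumes such an $S'$ is given.
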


\begin{proof}
Since Dominator plays optimally in games $S$ and $S^{\prime },$ and Staller
plays optimally in game $S$, it immediately follows that $|S^{\prime }|\leq
|S|$. Since both $S$ and $S^{\prime }$ are played with \textit{2-1-}%
strategies, then (\ref{eq:S_ineq}).
\end{proof}

\begin{proposition}
\label{prop:S_is_optimal} Let $S$ and $S^{\prime }$ be instances of $Game$ $%
\mathfrak{D}_{e}$ played on graph $G$ with \textit{2-1-}strategies. If $%
\left\vert V(G)\backslash C_{S}\right\vert \leq 1$ and $\left\vert
V(G)\backslash C_{S^{\prime }}\right\vert \leq 1$ then $|S|=|S^{\prime }|$.
\end{proposition}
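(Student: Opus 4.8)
The plan is to show that $|S|$ is determined by $|V(G)|$ alone; since the same argument applies to $S^{\prime}$, this will give $|S|=|S^{\prime}|$. Write $N=|V(G)|$ and $\ell=|S|$. We may assume $\ell\geq 1$, for if $\ell=0$ then $V(G)=V(G)\backslash C_{S}$ is independent in $G$ and the hypothesis forces $N\leq 1$, which likewise forces $|S^{\prime}|=0$. Since $S$ is played with a \textit{2-1-}strategy and $C_{S,0}=\varnothing$, counting the odd-indexed moves (each covering two new vertices) and the even-indexed moves (each covering one) among $1,\dots,\ell$ gives $|C_{S}|=\tfrac{3\ell}{2}$ when $\ell$ is even and $|C_{S}|=\tfrac{3\ell+1}{2}$ when $\ell$ is odd; moreover the $\ell$-th move covers one new vertex if $\ell$ is even and two if $\ell$ is odd, so $|C_{S,\ell-1}|=|C_{S}|-1$ in the first case and $|C_{S,\ell-1}|=|C_{S}|-2$ in the second.

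The next step is to combine two constraints. The hypothesis $|V(G)\backslash C_{S}|\leq 1$ is simply $N-|C_{S}|\leq 1$. On the other hand, the game $S$ is not over after $\ell-1$ moves, so by Proposition~\ref{prop:indep} the set $V(G)\backslash C_{S,\ell-1}$ is not independent in $G$ and hence has at least two elements, giving $N-|C_{S,\ell-1}|\geq 2$. Plugging in the expressions above: if $\ell$ is even both inequalities force $N=\tfrac{3\ell}{2}+1$, and if $\ell$ is odd they give $\tfrac{3\ell+1}{2}\leq N\leq\tfrac{3\ell+3}{2}$, i.e.\ $N\in\{\tfrac{3\ell+1}{2},\tfrac{3\ell+3}{2}\}$.

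Finally I would solve these relations for $\ell$. The even case requires $N\equiv 1\pmod 3$ and then $\ell=\tfrac{2}{3}(N-1)$; in the odd case, $N=\tfrac{3\ell+1}{2}$ requires $N\equiv 2\pmod 3$ and then $\ell=\tfrac{2N-1}{3}$, while $N=\tfrac{3\ell+3}{2}$ requires $N\equiv 0\pmod 3$ and then $\ell=\tfrac{2N}{3}-1$. Thus the three residue classes of $N$ modulo $3$ are each covered by exactly one of these formulas, so $\ell$ is a function of $N$; applying the same reasoning to $S^{\prime}$ (which also satisfies $|V(G)\backslash C_{S^{\prime}}|\leq 1$) gives $|S^{\prime}|$ as the same function of $N$, whence $|S|=|S^{\prime}|$. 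The point I expect to require the most care is the residue $N\equiv 0\pmod 3$: there the even length $\ell=\tfrac{2N}{3}$ is still numerically consistent with $|V(G)\backslash C_{S}|\leq 1$ (it would leave no uncovered vertex), and it is exactly the lower bound $N-|C_{S,\ell-1}|\geq 2$ coming from Proposition~\ref{prop:indep} --- a single uncovered vertex can never be present before the game ends --- that rules this possibility out.
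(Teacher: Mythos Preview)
Your proof is correct, but it takes a different route from the paper's. The paper argues by contradiction: assuming $|S|\neq|S'|$, it uses $|V(G)\backslash C_{S}|,|V(G)\backslash C_{S'}|\leq 1$ to get $\bigl||C_{S}|-|C_{S'}|\bigr|\leq 1$, then invokes Proposition~\ref{prop:even} to force the longer of $|S|,|S'|$ to be odd, and finally uses the \textit{2-1} counting to deduce $\bigl||C_{S}|-|C_{S'}|\bigr|\geq 2$, a contradiction. You instead prove directly that $|S|$ is determined by $|V(G)|$ via the parity case analysis, which in effect establishes the formula $|S|=\lceil\tfrac{2}{3}|V(G)|\rceil-1$; this is precisely the content of Lemma~\ref{lem:main}, which the paper proves separately (and by essentially the same case split you give). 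So your argument merges Proposition~\ref{prop:S_is_optimal} and Lemma~\ref{lem:main} into one computation, yielding strictly more information at the cost of a slightly longer proof of the proposition itself; the paper's contradiction argument is shorter here but defers the explicit formula to the next lemma. Your closing remark about the residue $N\equiv 0\pmod 3$ is exactly the right place to be careful, and your use of Proposition~\ref{prop:indep} at step $\ell-1$ (equivalently, the mechanism behind Proposition~\ref{prop:even}) correctly excludes the spurious even value $\ell=\tfrac{2N}{3}$.
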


\begin{proof}
Let $|S|\neq |S^{\prime }|$. Since $\left\vert V(G)\backslash
C_{S}\right\vert \leq 1$ and $\left\vert V(G)\backslash C_{S^{\prime
}}\right\vert \leq 1$ then $\left\vert \left\vert C_{S}\right\vert
-\left\vert C_{S^{\prime }}\right\vert \right\vert $ $\leq 1$. On the other
hand, by Proposition \ref{prop:even}, $max\left\{ |S|,|S^{\prime }|\right\} $
is odd, as $min\left\{ \left\vert V(G)\backslash C_{S}\right\vert
,\left\vert V(G)\backslash C_{S^{\prime }}\right\vert \right\} =0$. So, $%
||C_{S}|-|C_{S^{\prime }}||\geq 2$ as $S$ and $S^{\prime }$ are \textit{2-1-}%
strategies. Hence, the obtained contradiction proves the proposition.
\end{proof}

\begin{proposition}
\label{prop:upper_bound} If $U\subset V(G)$ is an independent set in a
connected graph $G$ and $M\subset E(G)$ is a matching in induced subgraph $%
G[V(G)\backslash U]$, then%
\begin{equation}
\gamma _{e,g}\left( G\right) \leq 2(|V(G)\backslash U|-|M|)-1.
\label{eq:upper_bound_all}
\end{equation}
\end{proposition}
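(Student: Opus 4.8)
The plan is to describe a single strategy for Dominator and bound the length of every game it produces; since that bounds the game length against an arbitrary Staller, it bounds $\gamma_{e,g}(G)$ from above. Put $W=V(G)\setminus U$, $t=|W|$, $m=|M|$, and call an edge of $M$ \emph{intact} at a given moment of the game if both of its endpoints are still uncovered. Define the potential
$$\Phi=\bigl(\#\text{ uncovered vertices of }W\bigr)-\bigl(\#\text{ intact edges of }M\bigr).$$
At the start $\Phi=t-m$, and by Proposition~\ref{prop:indep} the game ends exactly when the uncovered set is independent, so no intact edge of $M$ can remain and $\Phi\ge 0$ at the end.

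Dominator's strategy is simply to play, on each of his turns, a legal edge that decreases $\Phi$ by at least $1$. The first step is to prove such a move always exists while the game is running. Since then the uncovered set is not independent (Proposition~\ref{prop:indep}) and $U$ is independent, every uncovered--uncovered edge has an endpoint in $W$. If some uncovered--uncovered edge $(a,b)$ has both endpoints in $W$, Dominator covers two fresh vertices of $W$: if at most one of $a,b$ lies on an intact matching edge he plays $(a,b)$ and $\Phi$ drops by $2$ or $1$; if both do, he instead plays the intact matching edge through $a$ (legal, as both its endpoints are uncovered), which drops $\Phi$ by exactly $1$. Otherwise every uncovered--uncovered edge joins some $w\in W$ to some $u\in U$; then $w$ cannot lie on an intact matching edge (that edge would itself be an uncovered $W$--$W$ edge), so playing $(u,w)$ covers one fresh vertex of $W$, breaks no intact matching edge, and again drops $\Phi$ by $1$. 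Conversely, Staller can never raise $\Phi$: covering a vertex only destroys intact matching edges, and a single edge covers at most two fresh vertices of $W$ while each fresh vertex lies on at most one matching edge, so the decrease of the first term of $\Phi$ always matches the possible increase of $-(\text{second term})$; hence every move of Staller has $\Delta\Phi\le 0$.

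Writing $p,q$ for the numbers of moves of Dominator and Staller, accumulating the changes of $\Phi$ gives $p\le\Phi_{\text{start}}-\Phi_{\text{end}}\le\Phi_{\text{start}}=t-m$; since Dominator moves first, $q\le p$, so $|S|=p+q\le 2(t-m)$. To remove the last unit, suppose the game ends on a move of Staller with $q=p=t-m$. Then equality holds throughout: every Dominator move dropped $\Phi$ by exactly $1$, every Staller move by exactly $0$, and $\Phi_{\text{end}}=0$, so all of $W$ is covered at the end. But a Staller move with $\Delta\Phi=0$ either covers no fresh vertex of $W$ — impossible, since then $W$ was already fully covered before her move and the game was already over — or it breaks an intact matching edge one of whose endpoints is still uncovered after her move, contradicting that all of $W$ is covered at the end. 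Hence the game ends on a Dominator move, so $|S|=2p-1\le 2(t-m)-1$, or else $p\le t-m-1$ and $|S|=2p\le 2(t-m)-2$; in either case $\gamma_{e,g}(G)\le|S|\le 2(t-m)-1$.

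The main obstacle is the first step: verifying that, no matter how Staller has played so far, Dominator always has a legal move decreasing $\Phi$ — and in particular the bookkeeping in the sub-case where the naive "grab two fresh $W$-vertices" move would spoil two intact matching edges at once, the resolution being that in exactly that situation an intact matching edge is itself available and is the strictly better move. Once the potential is in place, the monotonicity for Staller and the end-of-game parity refinement are routine.
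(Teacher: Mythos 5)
Your proof is correct and rests on the same idea as the paper's (one-line) argument: Dominator uses the matching $M$ and then single edges to cover $V(G)\backslash U$ within $|M|+(|V(G)\backslash U|-2|M|)=|V(G)\backslash U|-|M|$ of his own moves, which gives the stated bound. Your potential function, the legality checks, and the end-of-game parity analysis simply make explicit the details (Staller's interference cannot slow Dominator down, and the game cannot end on a Staller move of full length) that the paper's proof leaves implicit.
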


\begin{proof}
Since Dominator at most with $|M|+|V(G)\backslash U|-2|M|$ steps dominates
all edges of graph $G$, then upper bound (\ref{eq:upper_bound_all}) holds
immediately.
\end{proof}

\section{Domination game played on $L(K_{m})$}

\label{sec:K_m}

\begin{lemma}
\label{lem:main} If there is an instance $S$ of $Game$ $\mathfrak{D}_{e}$
played on a graph $G$ with an optimal \textit{2-1-}strategy such that $%
|V(G)|-|C_{S}|\leq 1$ then 
\begin{equation*}
\gamma _{e,g}(G)=\left\lceil \frac{2}{3}|V(G)|\right\rceil -1.
\end{equation*}
\end{lemma}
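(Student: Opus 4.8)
The plan is to establish the formula by a matching upper and lower bound, both obtained by exploiting the hypothesis that the game ends with at most one uncovered vertex. Since the instance $S$ is played with a \textit{2-1-}strategy, after an even number $2k$ of moves exactly $3k$ vertices are covered, and after an odd number $2k+1$ of moves exactly $3k+2$ vertices are covered. The hypothesis $|V(G)|-|C_S|\le 1$ together with Proposition \ref{prop:indep} (the game is over precisely when the uncovered set becomes independent) pins down $|C_S|$ in terms of $|V(G)|$, which is the heart of the argument.

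First I would record the arithmetic. Write $N=|V(G)|$. If $\gamma_{e,g}(G)$ is odd, say $\gamma_{e,g}(G)=2k+1$, then $|C_S|=3k+2$, and the constraint $N-|C_S|\le 1$ forces $|C_S|\in\{N-1,N\}$ (it cannot exceed $N$); since $|C_S|\equiv 2\pmod 3$ this determines $k$, and one checks $2k+1=\lceil\frac23 N\rceil-1$ in the relevant residue cases. If $\gamma_{e,g}(G)$ is even, say $\gamma_{e,g}(G)=2k$, then by Proposition \ref{prop:even} we already have $N-|C_S|\ge 1$, so in fact $N-|C_S|=1$ and $|C_S|=3k=N-1$; but then $N\equiv 1\pmod 3$ and $2k=\tfrac{2}{3}(N-1)=\lceil\frac23 N\rceil-1$ as well, so the even case is consistent with the same formula. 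Thus in every case $\gamma_{e,g}(G)=\lceil\frac23 N\rceil-1$, provided such an instance $S$ is realized by optimal play — which it is, by hypothesis.

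The subtlety is that the displayed hypothesis only guarantees the \emph{existence} of one optimal instance $S$ with the stated property; to conclude $\gamma_{e,g}(G)$ equals the computed value I must be sure that $|S|=\gamma_{e,g}(G)$, and that every optimal instance gives the same length. The first point is immediate since $S$ is an instance of optimal play, so $|S|=\gamma_{e,g}(G)$ by definition. For robustness one can invoke Proposition \ref{prop:S_is_optimal}: any two \textit{2-1-}instances that both end with at most one uncovered vertex have equal length, so the value is well-defined; and Proposition \ref{prop:S_ineq} guarantees that if Dominator plays optimally with a \textit{2-1-}strategy the resulting cover is no larger than in the fully optimal instance, which rules out accidental over-covering. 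Assembling these observations, the length of $S$ is forced to be exactly $\lceil\frac23 N\rceil-1$.

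The main obstacle is purely the residue-class bookkeeping in the parity split: one must verify that the ceiling $\lceil\frac23 N\rceil-1$ simultaneously captures the odd case (where $|C_S|\in\{N-1,N\}$, i.e. $N\equiv 0$ or $2\pmod 3$) and the even case (where $|C_S|=N-1$, i.e. $N\equiv 1\pmod 3$), with no gap or overlap. Writing $N=3q+r$ with $r\in\{0,1,2\}$ and checking the three values of $r$ against $\lceil\frac23 N\rceil-1 = 2q + \lceil\frac{2r}{3}\rceil - 1$ makes this routine, so there is no serious difficulty beyond careful case analysis.
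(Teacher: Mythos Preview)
Your proof is correct and follows essentially the same route as the paper's: split according to the parity of $\gamma_{e,g}(G)$, use the \textit{2-1-}counting $|C_S|=3k$ or $|C_S|=3k+2$, combine with the hypothesis $|V(G)|-|C_S|\le 1$ (and Proposition~\ref{prop:even} in the even case) to pin down $|C_S|$, and then read off the residue of $|V(G)|$ modulo $3$ and the value of $\gamma_{e,g}(G)$. Your third paragraph invoking Propositions~\ref{prop:S_ineq} and~\ref{prop:S_is_optimal} is unnecessary here, since by hypothesis $S$ is already an \emph{optimal} instance, so $|S|=\gamma_{e,g}(G)$ holds by definition and no further robustness argument is needed.
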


\setcounter{case}{0}

\begin{proof}
Consider the following three cases.

\begin{case}
$\gamma _{e,g}(G)$ is even.

From Proposition \ref{prop:even} it follows that $|C_{S}|=|V(G)|-1.$ Since $%
S $ played with an optimal \textit{2-1-}strategy and $\gamma _{e,g}(G)$ is
even, then $|C_{S}|=\frac{3}{2}\gamma _{e,g}(G).$ Hence, $\gamma _{e,g}(G)=%
\frac{2}{3}(|V(G)|-1)$. Accordingly, $|V(G)|=1\left( \text{mod }3\right) $.
\end{case}

\begin{case}
$\gamma _{e,g}(G)$ is odd and $|C_{S}|=|V(G)|-1$.

Since $\gamma _{e,g}(G)$ is odd, $|C_{S}|=\frac{3}{2}\left( \gamma
_{e,g}(G)-1\right) +2$. Hence, $\gamma _{e,g}(G)=\frac{2}{3}(|V(G)|-2)+1$.
So, $|V(G)|=2\left( \text{mod }3\right) $.
\end{case}

\begin{case}
$\gamma _{e,g}(G)$ is odd and $|C_{S}|=|V(G)|$.

Analogously, $\gamma _{e,g}(G)=\frac{2}{3}|V(G)|-1$. So, $|V(G)|=0\left( 
\text{mod }3\right) $.
\end{case}

Therefore, \textit{(a)} if $|V(G)|=0\left( \text{mod }3\right) $ then $%
\gamma _{e,g}(G)=\frac{2}{3}|V(G)|-1$; \textit{(b)} if $|V(G)|=1\left( \text{%
mod }3\right) $ then $\gamma _{e,g}(G)=\frac{2}{3}|V(G)|-\frac{2}{3}$; and 
\textit{(c)} if $|V(G)|=2\left( \text{mod }3\right) $ then $\gamma _{e,g}(G)=%
\frac{2}{3}|V(G)|-\frac{1}{3}$.
\end{proof}

\begin{theorem}
\label{cor:K_n} Let $m\in 
\mathbb{N}
.$ Then 
\begin{equation}
\gamma _{e,g}(K_{m})=\left\lceil \frac{2}{3}|V(K_{m})|\right\rceil -1.
\label{eq:cor_K_n}
\end{equation}
\end{theorem}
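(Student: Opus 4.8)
The plan is to apply Lemma~\ref{lem:main} to $G=K_m$, which reduces the task to exhibiting a single instance $S$ of $Game$ $\mathfrak{D}_e$ played on $K_m$ with an optimal \textit{2-1-}strategy in which at most one vertex remains uncovered at the end. First I would dispose of the trivial small cases $m\le 2$ separately (where $E(K_m)$ is empty or a single edge), so that we may assume $m\ge 3$ and in particular $K_m$ is connected with $diam(K_m)=1$, since any two vertices are adjacent and for any edge $(v,u)$ and any previously covered vertex $w$ we have $dist(w,v)\le 1$.

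Since $diam(K_m)=1$, Proposition~\ref{prop:S_1} gives an optimal \textit{2-1-}strategy $\mathfrak{S}$ for $Game$ $\mathfrak{D}_e$ on $K_m$. The key point to verify is that \emph{every} instance played under a \textit{2-1-}strategy ends with at most one uncovered vertex. This is where Proposition~\ref{prop:S_is_optimal} does the work: it is enough to produce \emph{one} instance $S^\ast$ played with a \textit{2-1-}strategy satisfying $|V(K_m)\setminus C_{S^\ast}|\le 1$, and then, comparing $S^\ast$ with the optimal instance $S$ coming from $\mathfrak{S}$ via Proposition~\ref{prop:S_ineq} (together with Proposition~\ref{prop:even} handling parity), conclude $|V(K_m)\setminus C_S|\le 1$ as well. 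To build $S^\ast$: on Dominator's moves pick an edge joining two hitherto uncovered vertices (possible as long as at least two uncovered vertices remain, and the resulting edge dominates new edges because $K_m$ is complete), and on Staller's moves pick an edge from a covered vertex to an uncovered one (possible for the same reason). Running this greedily covers two new vertices per Dominator move and one per Staller move; the process halts exactly when fewer than two uncovered vertices are left after some Dominator move, or when the independence criterion of Proposition~\ref{prop:indep} is met — in a complete graph the latter forces $|V(K_m)\setminus C_{S^\ast}|\le 1$. Hence $|V(K_m)|-|C_{S^\ast}|\le 1$, and the same then holds for the optimal instance $S$.

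With such an optimal instance $S$ in hand, Lemma~\ref{lem:main} applies verbatim and yields $\gamma_{e,g}(K_m)=\left\lceil\frac{2}{3}|V(K_m)|\right\rceil-1$, which is~(\ref{eq:cor_K_n}); the small cases $m\le 2$ are checked directly against this formula. I expect the main obstacle to be the bookkeeping in the previous paragraph — namely arguing cleanly that in a complete graph a \textit{2-1-}play cannot strand two or more vertices uncovered, and that the greedy $S^\ast$ is genuinely a legal instance (each chosen edge dominates at least one new edge) rather than getting bogged down in the parity subcases, which are already handled by Propositions~\ref{prop:even}, \ref{prop:S_ineq} and~\ref{prop:S_is_optimal}.
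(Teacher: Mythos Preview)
Your overall plan --- invoke Proposition~\ref{prop:S_1} to get an optimal \textit{2-1-}instance $S$, then feed it to Lemma~\ref{lem:main} --- matches the paper. But you take a detour that is both unnecessary and, as written, not quite sound. You build an auxiliary greedy instance $S^{\ast}$, observe (via Proposition~\ref{prop:indep}) that in $K_m$ it ends with at most one uncovered vertex, and then try to transfer this to the optimal $S$ using Propositions~\ref{prop:S_ineq} and~\ref{prop:S_is_optimal}. The transfer does not go through: Proposition~\ref{prop:S_ineq} requires the comparison instance to have Dominator playing \emph{optimally}, whereas your greedy $S^{\ast}$ only has Dominator covering two new vertices per move, which is not the same thing; and Proposition~\ref{prop:S_is_optimal} already assumes \emph{both} instances leave at most one vertex uncovered, so it cannot be used to deduce that property for $S$.

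The fix is that the detour is entirely avoidable. Proposition~\ref{prop:indep} applies to \emph{any} instance of the game, in particular to the optimal \textit{2-1-}instance $S$ itself: at the end of $S$ the set $V(K_m)\setminus C_S$ is independent in $K_m$, and since $K_m$ has no independent set of size two, $|V(K_m)\setminus C_S|\le 1$ immediately. This is exactly the paper's argument: $diam(K_m)=1$ gives the optimal \textit{2-1-}instance via Proposition~\ref{prop:S_1}, Proposition~\ref{prop:indep} gives $|C_S|\ge |V(K_m)|-1$, and Lemma~\ref{lem:main} finishes. No auxiliary $S^{\ast}$, no Propositions~\ref{prop:even}, \ref{prop:S_ineq}, \ref{prop:S_is_optimal} are needed here.
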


\begin{proof}
Let $m\geq 3$. Since $diam(K_{m})=1$, from Proposition \ref{prop:S_1} it
follows that there is an instance $S$ for $Game$ $\mathfrak{D}_{e}$ played
on $K_{m}$ with an optimal \textit{2-1-}strategy. Then, due to Proposition %
\ref{prop:indep}, $|C_{S}|\geq |V(K_{m})|-1$. Hence, formula (\ref%
{eq:cor_K_n}) immediately follows from Lemma \ref{lem:main}.
\end{proof}

\section{Domination game played on $L(K_{\overline{m}})$}

\label{sec:K_m_vector}

\begin{proposition}
\label{prop:diam_1} The vertex-edge diameter $diam(G)$, defined in formula (%
\ref{eq:vertex-edge-diameter}), of a connected graph $G$ with $|E(G)|\geq 2$
is equal to $1$ if and only if $G$ is a complete multipartite graph.
\end{proposition}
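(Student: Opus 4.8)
The plan is to rewrite the condition $diam(G)=1$ as a forbidden induced-subgraph condition and then recognize it as the classical description of complete multipartite graphs. Since $G$ is connected with $|E(G)|\ge 2$, we have $|V(G)|\ge 3$, so all distances are finite, and taking any edge together with a vertex not incident to it already shows $diam(G)\ge 1$. Hence it suffices to prove that $diam(G)\ge 2$ holds if and only if $G$ is not complete multipartite.

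First I would unwind formula (\ref{eq:vertex-edge-diameter}): for an edge $(v,u)\in E(G)$ and a vertex $w\in V(G)$ one has $\min\{dist(w,v),dist(w,u)\}\ge 2$ exactly when $w\notin N[v]\cup N[u]$, i.e.\ $w$ is distinct from $v$ and $u$ and is adjacent to neither. Consequently, $diam(G)\ge 2$ if and only if $G$ contains three (necessarily distinct) vertices $v,u,w$ with $(v,u)\in E(G)$ and $(w,v),(w,u)\notin E(G)$; equivalently, $G$ has an induced subgraph consisting of a single edge together with an isolated vertex.

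The last step is the standard fact that $G$ is complete multipartite precisely when the relation on $V(G)$ defined by ``$x=y$ or $(x,y)\notin E(G)$'' is transitive --- in which case its classes are the parts and $G$ is complete between distinct classes. Transitivity fails exactly when there are distinct $x,y,z$ with $(x,y),(y,z)\notin E(G)$ but $(x,z)\in E(G)$, which is precisely the induced ``edge plus isolated vertex'' exhibited in the previous paragraph (with $x,z$ the endpoints of the edge and $y$ the isolated vertex). Combining the two equivalences gives the proposition. I would also note that a complete multipartite graph with $|E(G)|\ge 2$ has at least two nonempty parts and is therefore connected, so the connectedness hypothesis is only needed to make $dist(\cdot,\cdot)$ well-defined.

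I do not anticipate any genuine obstacle; the care points are all bookkeeping: checking that $diam(G)$ cannot equal $0$ under the hypotheses, verifying that the three vertices produced in the unwinding step are genuinely distinct (this uses that $G$ is simple), and stating the complete-multipartite characterization cleanly --- citing \cite{Harary1969} if convenient --- rather than re-deriving it from scratch.
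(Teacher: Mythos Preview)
Your proposal is correct and takes essentially the same approach as the paper: both arguments hinge on the observation that $diam(G)=1$ is equivalent to transitivity of the non-adjacency relation on $V(G)$, which in turn characterizes complete multipartite graphs. The paper phrases this by directly verifying that $\alpha$ defined by $v\alpha u \Leftrightarrow (v,u)\notin E(G)$ is an equivalence relation, while you package the same step as the forbidden induced subgraph ``edge plus isolated vertex''; these are the same argument in slightly different dress.
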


\begin{proof}
If $\overline{m}\in 
\mathbb{N}
^{n}$ and $\left\vert E(K_{_{\overline{m}}})\right\vert \geq 2$ then $%
diam\left( K_{_{\overline{m}}}\right) =1$. Thus, the sufficiency is proved.

Let $diam(G)=1$. A binary relationship $\alpha $ on $V(G)$ is defined as
follows: 
\begin{equation*}
v\alpha u\Leftrightarrow (v,u)\notin E(G)\qquad \forall v,u\in V(G).
\end{equation*}

It is trivial that $\alpha $ is reflexive ($v\alpha v$ for every $v\in V(G)$%
) and symmetric ($v\alpha u$ $\Rightarrow u\alpha v$ for every $v,u\in V(G)$%
). Assume $(v,u)\notin E(G)$ and $(u,w)\notin E(G)$ ($v,u,w\in V(G)$). Then, 
$(v,w)\notin E(G)$. Otherwise, if $(v,w)\in E(G),$ then due to $diam(G)=1,$
either $(v,u)\in E(G)$ or $(u,w)\in E(G).$ Hence $\alpha $ is transitive.
Thus, $\alpha $ is a relationship of equivalence. So, $V(G)$ can be
partitioned into disjoint sets $U_{1},...,U_{r},$ such that $U_{i}$ $(1\leq
i\leq r)$ is an independent set in $G$. Therefore, $G$ is isomorphic to $%
K_{\left( |U_{1}|,...,|U_{r}|\right) }$. Thus, the necessity is proved.
\end{proof}

\begin{definition}
Let $S$ be an instance of $Game$ $\mathfrak{D}_{e}$ played on graph $K_{_{%
\overline{m}}}$ ($\overline{m}\in 
\mathbb{N}
^{n}$, $n\geq 2$) and let for each $i$ ($1\leq i \leq |S|$) partite classes $%
V_{1},...,V_{n}$ of graph $K_{_{\overline{m}}}$ be renumbered as $%
V_{l_{1}^{\left( i\right) }},....V_{l_{n}^{\left( i\right) }}$ to satisfy
condition $|V_{l_{1}^{\left( i\right) }}\backslash C_{S,i-1}|\leq ...\leq
|V_{l_{n-1}^{\left( i\right) }}\backslash C_{S,i-1}|\leq |V_{l_{n}^{\left(
i\right) }}\backslash C_{S,i-1}|$. Then say that Staller plays $S$ with a 
\textit{semi-greedy strategy }if for each even $i$ ($1\leq i \leq |S|$)
Staller chooses an edge which covers exactly one new vertex $c_{i}$ which
satisfies to following conditions: 
\begin{equation*}
\begin{tabular}{ll}
$c_{i}\in V_{l_{n}^{\left( i\right) }}$ & when $|V_{l_{n}^{\left( i\right)
}}\backslash C_{S,i-1}|>|V_{l_{n-1}^{\left( i\right) }}\backslash
C_{S,i-1}|, $ \\ 
$c_{i}\in V(K_{_{\overline{m}}})\backslash \left( V_{l_{n-1}^{\left(
i\right) }}\cup V_{l_{n}^{\left( i\right) }}\right) $ & when $%
|V_{l_{n}^{\left( i\right) }}\backslash C_{S,i-1}|=|V_{l_{n-1}^{\left(
i\right) }}\backslash C_{S,i-1}|$ and $V(K_{_{\overline{m}}})\backslash
\left( V_{l_{n-1}^{\left( i\right) }}\cup V_{l_{n}^{\left( i\right)
}}\right) \backslash C_{S,i-1}\neq \varnothing ,$ \\ 
$c_{i}\in V_{l_{n-1}^{\left( i\right) }}\cup V_{l_{n}^{\left( i\right) }}$ & 
when $|V_{l_{n}^{\left( i\right) }}\backslash C_{S,i-1}|=|V_{l_{n-1}^{\left(
i\right) }}\backslash C_{S,i-1}|$ and $V(K_{_{\overline{m}}})\backslash
\left( V_{l_{n-1}^{\left( i\right) }}\cup V_{l_{n}^{\left( i\right)
}}\right) \backslash C_{S,i-1}=\varnothing .$%
\end{tabular}%
\end{equation*}
\end{definition}

\begin{proposition}[Lower Bound]
\label{prop:S_all} Let $n\geq 2,$ $\overline{m}\in 
\mathbb{N}
^{n},$ $m_{1}\leq m_{2}\leq ...\leq m_{n}$ and let $S$ be an instance of $%
Game$ $\mathfrak{D}_{e}$ played on graph $K_{_{\overline{m}}}$ with a 
\textit{2-1-}strategy such that (a) Dominator plays with an optimal strategy
and (b) Staller plays with a semi-greedy strategy. If at the end of the game
the number of uncovered vertices $V(K_{_{\overline{m}}})\backslash C_{S}$ is
not less than $2$ then 
\begin{equation}
\gamma _{e,g}\left( K_{_{\overline{m}}}\right) \geq |S|\geq 2\max \left\{
\left\lceil \frac{1}{2}\overset{n-1}{\underset{j=1}{\sum }}m_{j}\right\rceil
,m_{n-1}\right\} -1.  \label{eq:Lower_bound}
\end{equation}
\end{proposition}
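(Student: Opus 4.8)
The plan is to dispose of $\gamma_{e,g}(K_{\overline m})\ge|S|$ at once — it is immediate, since Dominator follows an optimal strategy in $S$, so $S$ is no longer than the game that strategy produces against an optimal Staller, whose length is $\gamma_{e,g}(K_{\overline m})$ — and then to prove $|S|\ge 2\lceil\frac12\sum_{j=1}^{n-1}m_j\rceil-1$ and $|S|\ge 2m_{n-1}-1$ separately. First I would record that $|S|$ is odd: if Staller made the last move, write $V_c$ for the survivor class ($|V_c\setminus C_S|\ge2$ by hypothesis); the last vertex cannot lie in $V_c$ (else $V(K_{\overline m})\setminus C_{S,|S|-1}$ would already be independent), so it is the only uncovered vertex of some other class, making $V_c$ the strictly largest class at that step and hence, by the semi-greedy rule, the class Staller must play in — a contradiction. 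So $|S|=2d-1$ with $d$ Dominator and $d-1$ Staller moves, and $|C_S|=3d-1$ by the defining property of a $2$-$1$-strategy. I would also show that under our hypothesis $m_{n-1}<m_n$: if $m_{n-1}=m_n$ then the bound of Proposition~\ref{prop:upper_bound} (taken with $U=V_n$ and a maximum matching of $K_{(m_1,\ldots,m_{n-1})}$, which yields $\gamma_{e,g}(K_{\overline m})\le 2\max\{\lceil\frac12\sum_{j=1}^{n-1}m_j\rceil,m_{n-1}\}-1$) turns out to be at least $\lceil\frac23|V(K_{\overline m})|\rceil-1$, whereas a game ending with $\ge2$ uncovered vertices has length strictly below $\lceil\frac23|V(K_{\overline m})|\rceil-1$; with $|S|\le\gamma_{e,g}(K_{\overline m})$ this is impossible.

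Next I would establish a localization property of semi-greedy play: throughout $S$ the two partite classes with the most uncovered vertices are $V_{n-1}$ and $V_n$. This goes by induction on the move index — Dominator only lowers class counts, and a semi-greedy Staller move removes a vertex either from the unique largest class or, in a tie at the top, from a class outside the top two, and in neither case can a class $V_j$ with $j\le n-2$ overtake $V_{n-1}$ or $V_n$. In particular the survivor $V_c$ equals $V_{n-1}$ or $V_n$.

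I would then obtain the two bounds from monotone potentials. For $|S|\ge2m_{n-1}-1$, put $\mu_i=\min\{|V_{n-1}\setminus C_{S,i}|,|V_n\setminus C_{S,i}|\}$, so $\mu_0=m_{n-1}$ and $\mu_{|S|}=0$; every move lowers $\mu$ by at most $1$, and a semi-greedy Staller move lowers $\mu$ only from a state in which $V_1,\ldots,V_{n-2}$ are already covered and $|V_{n-1}\setminus C|=|V_n\setminus C|$ — but from such a state the remaining play is, up to already-dominated edges, a Staller-start edge domination game on a balanced complete bipartite graph, which Dominator can end with at most one uncovered vertex, contradicting $|V(K_{\overline m})\setminus C_S|\ge2$. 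Hence each unit drop of $\mu$ is charged to a distinct Dominator move, so $d\ge m_{n-1}$ and $|S|=2d-1\ge2m_{n-1}-1$. For $|S|\ge2\lceil\frac12\sum_{j=1}^{n-1}m_j\rceil-1$, put $\Phi_i=|V(K_{\overline m})\setminus C_{S,i}|-\max_j|V_j\setminus C_{S,i}|$ (uncovered vertices lying outside a largest class), so $\Phi_0=\sum_{j=1}^{n-1}m_j$ and $\Phi_{|S|}=0$; a Dominator move lowers $\Phi$ by at most $2$, while a semi-greedy Staller move lowers $\Phi$ by $0$ unless the largest class is not unique, a configuration which — using localization, $m_{n-1}<m_n$, and again the hypothesis to exclude the only way a top-tie could first arise — never confronts Staller. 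Thus Dominator's $d$ moves absorb the whole decrease $\sum_{j=1}^{n-1}m_j$ two at a time, so $d\ge\lceil\frac12\sum_{j=1}^{n-1}m_j\rceil$ and $|S|=2d-1\ge2\lceil\frac12\sum_{j=1}^{n-1}m_j\rceil-1$.

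The hard part will be the two "unless"-clauses: showing that a semi-greedy Staller essentially never helps Dominator — never drains the smaller of $V_{n-1},V_n$ once the small classes are gone, and never acts against a tied maximum. Each reduction must combine the semi-greedy rule, the optimality of Dominator, and the standing hypothesis that at least two vertices remain uncovered, and I expect it to require a careful case analysis (splitting on which of $V_{n-1},V_n$ is the survivor, and, for the bipartite endgame, either a direct evaluation of the relevant small Staller-start game-domination numbers or an induction on $|V(K_{\overline m})|$). The remaining ingredients — the oddness of $|S|$, the localization of the two largest classes, and the arithmetic turning the potential estimates into the stated inequalities — should be routine once these are in hand.
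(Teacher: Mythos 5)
Your outline defers exactly the step that carries the whole proposition, and the two shortcuts you offer in its place do not hold up. (i) The deduction that the hypothesis forces $m_{n-1}<m_n$ is a non sequitur: from $m_{n-1}=m_n$ you get that the matching upper bound of Proposition \ref{prop:rest>2} is at least $\lceil\frac{2}{3}|V(K_{\overline m})|\rceil-1$, and from the hypothesis that $|S|<\lceil\frac{2}{3}|V(K_{\overline m})|\rceil-1$; together with $|S|\le\gamma_{e,g}(K_{\overline m})\le 2\max\{\lceil\frac12\sum_{j=1}^{n-1}m_j\rceil,m_{n-1}\}-1$ these inequalities are perfectly consistent, so no contradiction follows. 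To obtain one you would need a lower bound on $|S|$ (or the identity $|S|=\gamma_{e,g}$ of Corollary \ref{prop:semi-greedy_is_optimal}, which rests on this very proposition), so the argument is circular. (ii) The "localization" claim that the two classes with most uncovered vertices stay $V_{n-1},V_n$ throughout is false as stated, and the proposed induction breaks on ties: for $\overline m=(3,3,3)$, after Dominator's first move the uncovered counts are $(3,2,2)$ and $V_1$ is the unique largest class, so a class of index $\le n-2$ has overtaken the pair. The correct statement — and the actual content of the paper's proof — is Claim \ref{claim:S_all_from_l}: the survivor class $V_l$ (which satisfies $|V_l|=m_n$, Claim \ref{claim:S_V_n=V_l}) is \emph{strictly} the largest uncovered class before every Staller move, whence Staller covers vertices only of $V_l$. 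That is precisely what would make both of your potential functions behave ($\mu$ never lowered by Staller, no top tie ever confronting Staller), and it is not routine: the paper proves it by contradiction through a three-case count over the $f$ remaining Staller moves, and nothing in your sketch replaces that — you explicitly park it as "the hard part".

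The pieces you do supply are mostly sound but not decisive. The reduction $\gamma_{e,g}(K_{\overline m})\ge|S|$, the oddness of $|S|$ (same argument as Claim \ref{claim:S_odd}), and the closing arithmetic from $d\ge m_{n-1}$ and $d\ge\lceil\frac12\sum_{j=1}^{n-1}m_j\rceil$ are fine. The bipartite endgame observation needs rephrasing: optimal Dominator minimizes the number of moves, not the number of uncovered vertices, so "Dominator \emph{can} end with at most one uncovered vertex" is not by itself a contradiction; what saves the step is that once $V_1,\dots,V_{n-2}$ are fully covered the only undominated edges join uncovered vertices of the two remaining classes, so under the 2-1 and semi-greedy rules the play is \emph{forced} to exhaust one side and end with at most one uncovered vertex. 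But since the strict-largest property (Claim \ref{claim:S_all_from_l}) is never established, both "unless"-clauses, and with them both lower bounds, remain unproven; as it stands the proposal is an outline with the central lemma missing and one supporting pillar resting on an invalid deduction.
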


\begin{proof}
Since Proposition \ref{prop:indep}, there is partite class $V_{l}$ such that 
$V(K_{_{\overline{m}}})\backslash C_{S}\subseteq V_{l}$.

\begin{claim}
\label{claim:S_odd} The number $|S|$ is odd.
\end{claim}

\begin{proof}
Let $|S|$ be even. Then at last step exactly one new vertex $w\in V(K_{_{%
\overline{m}}})\backslash C_{S,|S|-1}$ is covered, as $S$ is played with a
2-1-strategy. By Proposition \ref{prop:indep}, $w\notin V_{l}$, i.e. there
is an index $l^{\prime }$ such that $l^{\prime }\neq l$ and $w\in
V_{l^{\prime }}$. Since $|V_{l^{\prime }}\backslash C_{S,|S|-1}|=1$ and $%
|V_{l}\backslash C_{S,|S|-1}|\geq 2$, from Staller's strategy it follows
that in $S$ at last step must be chosen vertex from $V_{l}$ and game will
not be over at step $|S|$. Thus, the obtained contradiction proves Claim \ref%
{claim:S_odd}.
\end{proof}

\begin{claim}
\label{claim:S_all_from_l} For $i=1,3,...,|S|$%
\begin{equation}
\left\vert V_{l}\backslash C_{S,i}\right\vert >\left\vert V_{k}\backslash
C_{S,i}\right\vert \qquad k=1,2,...,n;k\neq l.  \label{eq:max_indep_set}
\end{equation}
\end{claim}

\begin{proof}
Since $V(K_{_{\overline{m}}})\backslash C_{S}\subseteq V_{l}$, (\ref%
{eq:max_indep_set}) holds when $|S|=1$. Hence, assume $|S|>1$. Claim \ref%
{claim:S_all_from_l} when $|S|>1$ will be proved by a contrary assumption.
It is assumed there exist some even $p$ ($1<p<|S|$) and partite class $%
V_{l^{\prime }}$ such that%
\begin{equation}
\left\vert V_{l}\backslash C_{S,i}\right\vert >\left\vert V_{k}\backslash
C_{S,i}\right\vert \qquad k=1,2,...,n;k\neq l;i=p+1,p+3,...,|S|,
\label{eq:max_indep_set_p}
\end{equation}%
and%
\begin{equation}
\left\vert V_{l}\backslash C_{S,p-1}\right\vert \leq \left\vert V_{l^{\prime
}}\backslash C_{S,p-1}\right\vert .
\end{equation}

From inequalities (\ref{eq:max_indep_set_p}), due to Staller's strategy,
follows that 
\begin{equation}
c_{i}\in V_{l}\qquad i=p+2,p+4,...,|S|-1.  \label{eq:S_all_*}
\end{equation}

Let $f$ be the number of remaining moves for Staller to complete the game
after $p^{th}$ move, i.e. $f\equiv \frac{1}{2}(|S|-1-p)$, and since $|S|$ is
odd, Dominator needs $f+1$ moves to complete the game. On the strength of
Staller's strategy, consider the following three cases.

\setcounter{case}{0}

\begin{case}
$\left\vert V_{l}\backslash C_{S,p-1}\right\vert <\left\vert V_{l^{\prime
}}\backslash C_{S,p-1}\right\vert $.

Since (a) at each step Dominator can cover at most one vertex from the
independent set $V_{l^{\prime }}$, (b) at step $p$ Staller can cover at most
one vertex from $V_{l^{\prime }}$ and (c) in remaining $f$ moves Staller
covers vertices only from $V_{l}$ (see (\ref{eq:S_all_*})), then 
\begin{equation}
\left\vert V_{l^{\prime }}\backslash C_{S}\right\vert \geq \left\vert
V_{l^{\prime }}\backslash C_{S,p-1}\right\vert -\left( f+1\right) -1.
\label{eq:S_all_1}
\end{equation}%
and 
\begin{equation}
\left\vert V_{l}\backslash C_{S}\right\vert \leq \left\vert V_{l}\backslash
C_{S,p-1}\right\vert -f.  \label{eq:S_all_2}
\end{equation}

Since (\ref{eq:S_all_1}) and (\ref{eq:S_all_2}), 
\begin{equation*}
\left\vert V_{l^{\prime }}\backslash C_{S}\right\vert \geq \left\vert
V_{l^{\prime }}\backslash C_{S,p-1}\right\vert -f-2\geq \left\vert
V_{l}\backslash C_{S,p-1}\right\vert +1-f-2\geq \left\vert V_{l}\backslash
C_{S}\right\vert -1\geq 1,
\end{equation*}%
which contradicts to $\left\vert V_{l^{\prime }}\backslash C_{S}\right\vert
=0$ (since $V(K_{_{\overline{m}}})\backslash C_{S}\subseteq V_{l}$). Thus,
case 1 is impossible.
\end{case}

\begin{case}
$\left\vert V_{l}\backslash C_{S,p-1}\right\vert =\left\vert V_{l^{\prime
}}\backslash C_{S,p-1}\right\vert $ and $c_{p}\notin V_{l^{\prime }}$.

From (\ref{eq:S_all_*}) it follows that (\ref{eq:S_all_2}) holds. Since at
each remaining step Dominator can cover at most one vertex from the
independent set $V_{l^{\prime }}$, if (\ref{eq:S_all_*}) is taken into
account, then 
\begin{equation}
\left\vert V_{l^{\prime }}\backslash C_{S}\right\vert \geq \left\vert
V_{l^{\prime }}\backslash C_{S,p-1}\right\vert -\left( f+1\right) .
\label{eq:S_all_4}
\end{equation}

From inequalities (\ref{eq:S_all_2}) and (\ref{eq:S_all_4}) it follows that 
\begin{equation*}
0=\left\vert V_{l^{\prime }}\backslash C_{S}\right\vert \geq \left\vert
V_{l^{\prime }}\backslash C_{S,p-1}\right\vert -f-1\geq \left\vert
V_{l}\backslash C_{S,p-1}\right\vert -f-1\geq \left\vert V_{l}\backslash
C_{S}\right\vert -1\geq 1,
\end{equation*}%
which is contradictory. Thus, case 2 is also impossible.
\end{case}

\begin{case}
$\left\vert V_{l}\backslash C_{S,p-1}\right\vert =\left\vert V_{l^{\prime
}}\backslash C_{S,p-1}\right\vert $ and $c_{p}\in V_{l^{\prime }}$.

Since Staller's strategy $V(K_{_{\overline{m}}})\backslash \left( V_{l}\cup
V_{l^{\prime }}\right) \backslash C_{S,p-1}=\varnothing \,$, so at each step
Dominator covers one vertex from both independent set $V_{l}$ and
independent set $V_{l^{\prime }}$, if (\ref{eq:S_all_*}) is taken into
account, then 
\begin{equation}
\left\vert V_{l}\backslash C_{S}\right\vert =\left\vert V_{l}\backslash
C_{S,p-1}\right\vert -f-\left( f+1\right) .  \label{eq:S_all_3}
\end{equation}%
and (\ref{eq:S_all_1}) holds. Inequalities (\ref{eq:S_all_1}) and (\ref%
{eq:S_all_3}) yield contradictory $0=\left\vert V_{l^{\prime }}\backslash
C_{S}\right\vert \geq \left\vert V_{l^{\prime }}\backslash
C_{S,p-1}\right\vert -f-2=\left\vert V_{l}\backslash C_{S,p-1}\right\vert
-f-2=\left\vert V_{l}\backslash C_{S}\right\vert +f-1=1+f\geq 1.$ Thus, case
3 is impossible as well.
\end{case}

Thus, the obtained contradictions prove Claim \ref{claim:S_all_from_l}.
\end{proof}

\begin{claim}
\label{claim:S_V_n=V_l} $\left\vert V_{l}\right\vert =\left\vert
V_{n}\right\vert $.
\end{claim}

\begin{proof}
Since Claim \ref{claim:S_all_from_l}, either $n=l$ or $\left\vert
V_{l}\backslash C_{S,1}\right\vert >\left\vert V_{n}\backslash
C_{S,1}\right\vert $. So, $|V_{l}|=|V_{n}|$ as $\left\vert V_{n}\cap
C_{S,1}\right\vert \leq 1$.
\end{proof}

In virtue of Claim \ref{claim:S_V_n=V_l}, assume that $l=n$. So, from Claims %
\ref{claim:S_all_from_l} and \ref{claim:S_V_n=V_l} it follows that in $S$
Stellar does not cover vertices from $V\left( K_{_{\overline{m}}}\right)
\backslash V_{n}$. Hence, on the one hand, since Dominator needs at least $%
m_{n-1}$ steps to cover all vertices of independent set $V_{n-1}$ to
complete the game, $\gamma _{e,g}\left( K_{_{\overline{m}}}\right) \geq
|S|\geq 2m_{n-1}-1$. On the other hand, as Dominator covers exactly two new
vertices at each step, Dominator needs at least $\left\lceil \frac{1}{2}%
\left\vert V\left( K_{_{\overline{m}}}\right) \backslash V_{n}\right\vert
\right\rceil $ steps to cover all vertices of $V\left( K_{_{\overline{m}%
}}\right) \backslash V_{n}$. So, $\gamma _{e,g}\left( K_{_{\overline{m}%
}}\right) \geq |S|\geq 2\left\lceil \frac{1}{2}\left\vert V\left( K_{_{%
\overline{m}}}\right) \backslash V_{n}\right\vert \right\rceil -1$. Thus,
lower bound (\ref{eq:Lower_bound}) holds.
\end{proof}

\begin{example}
Let $G$ be a graph with vertaices $\{v_{1},...,v_{7}\}$ and edges $%
\{(v_{1},v_{3}),(v_{2},v_{3}),(v_{3},v_{4}),(v_{4},v_{5}),(v_{5},v_{6}),(v_{5},v_{7})\} 
$. Although, $S=(v_{3},v_{4})(v_{4},v_{5})$ is an instance of $Game$ $%
\mathfrak{D}_{e}$ played on $G$ with an optimal 2-1-strategy such that $%
|V(G)\backslash C_{S}|=4$ but $|S|$ is even. So, Claim \ref{claim:S_odd}
does not work for $S$ because of $dima(G)\neq 1$ and Staller could not play
with semi-greedy strategy.
\end{example}

\begin{proposition}[Upper Bound]
\label{prop:rest>2} Let $n\geq 2,$ $\overline{m}\in 
\mathbb{N}
^{n},$ $m_{1}\leq m_{2}\leq ...\leq m_{n}$. Then%
\begin{equation}
\gamma _{e,g}\left( K_{_{\overline{m}}}\right) \leq 2\max \left\{
\left\lceil \frac{1}{2}\overset{n-1}{\underset{j=1}{\sum }}m_{j}\right\rceil
,m_{n-1}\right\} -1.  \label{eq:value_in_case_>2}
\end{equation}
\end{proposition}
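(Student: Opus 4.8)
The plan is to apply Proposition~\ref{prop:upper_bound} with the independent set $U=V_{n}$ (a largest partite class). Since $n\geq 2$ the graph $K_{_{\overline{m}}}$ is connected, $G[V(K_{_{\overline{m}}})\backslash V_{n}]$ is the complete multipartite graph with partite classes $V_{1},\dots ,V_{n-1}$, and $|V(K_{_{\overline{m}}})\backslash V_{n}|=N'$, where $N'\equiv \overset{n-1}{\underset{j=1}{\sum }}m_{j}$. Hence by (\ref{eq:upper_bound_all}) it suffices to exhibit a matching $M$ in $G[V_{1}\cup \dots \cup V_{n-1}]$ with
\[
|M|=\min \left\{ \left\lfloor \tfrac{1}{2}N'\right\rfloor ,\ N'-m_{n-1}\right\} ,
\]
for then, by the trivial identity $N'-\min \{a,b\}=\max \{N'-a,N'-b\}$, we get $N'-|M|=\max \left\{ \left\lceil \tfrac{1}{2}N'\right\rceil ,m_{n-1}\right\} $, and the bound $\gamma _{e,g}(K_{_{\overline{m}}})\leq 2(N'-|M|)-1$ furnished by (\ref{eq:upper_bound_all}) is exactly (\ref{eq:value_in_case_>2}).

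I would construct such a matching by distinguishing two cases according to how $m_{n-1}$ compares with $\overset{n-2}{\underset{j=1}{\sum }}m_{j}$ (equivalently, with $\left\lceil \tfrac{1}{2}N'\right\rceil $, since $m_{n-1}$ is an integer and $m_{n-1}\geq \left\lceil \tfrac{1}{2}N'\right\rceil \Leftrightarrow 2m_{n-1}\geq N'\Leftrightarrow m_{n-1}\geq \overset{n-2}{\underset{j=1}{\sum }}m_{j}$). If $m_{n-1}\geq \overset{n-2}{\underset{j=1}{\sum }}m_{j}$, then $|V_{n-1}|\geq |V_{1}\cup \dots \cup V_{n-2}|$, so one may inject $V_{1}\cup \dots \cup V_{n-2}$ into $V_{n-1}$; the $\overset{n-2}{\underset{j=1}{\sum }}m_{j}$ edges joining each vertex to its image form a matching of size $\overset{n-2}{\underset{j=1}{\sum }}m_{j}=N'-m_{n-1}$, which in this case is $\leq \left\lfloor \tfrac{1}{2}N'\right\rfloor $. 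If instead $m_{n-1}<\overset{n-2}{\underset{j=1}{\sum }}m_{j}$, list all $N'$ vertices of $V_{1}\cup \dots \cup V_{n-1}$ as a single sequence $u_{1},\dots ,u_{N'}$ in which each partite class occupies a consecutive block, and pair $u_{i}$ with $u_{i+\lceil N'/2\rceil }$ for $i=1,\dots ,\lfloor N'/2\rfloor $. Each block has length at most $m_{n-1}<\left\lceil \tfrac{1}{2}N'\right\rceil $, so $u_{i}$ and $u_{i+\lceil N'/2\rceil }$ never lie in the same block; hence these pairs are edges of $G[V_{1}\cup \dots \cup V_{n-1}]$, they are pairwise disjoint, and their number is $\lfloor N'/2\rfloor $, which here equals $\min \{\lfloor \tfrac{1}{2}N'\rfloor ,N'-m_{n-1}\}$. (When $n=2$ the first case applies with $M=\varnothing $, giving the bound $2m_{1}-1$.)

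The only real content is the matching construction, and within it the balanced case $m_{n-1}<\overset{n-2}{\underset{j=1}{\sum }}m_{j}$; the ``consecutive blocks'' ordering is precisely what makes the distinct-parts condition immediate, so I expect no genuine obstacle. Beyond that the argument is bookkeeping: one checks that the index sets $\{1,\dots ,\lfloor N'/2\rfloor \}$ and $\{\lceil N'/2\rceil +1,\dots ,N'\}$ are disjoint (true since $\lfloor N'/2\rfloor <\lceil N'/2\rceil +1$), so that no vertex is paired twice, and that $i+\lceil N'/2\rceil \leq N'$ for $i\leq \lfloor N'/2\rfloor $, so that every pair stays within the sequence.
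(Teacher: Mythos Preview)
Your proof is correct and follows essentially the same route as the paper: apply Proposition~\ref{prop:upper_bound} with $U=V_{n}$, split into the two cases $m_{n-1}\gtrless \sigma_{n-2}$, and exhibit a matching of size $\sigma_{n-2}$ in the unbalanced case and $\lfloor \sigma_{n-1}/2\rfloor$ in the balanced case. The only difference is that the paper invokes \cite{SITTON1996} for the maximum matching in $K_{(m_{1},\dots ,m_{n-1})}$, whereas you give the explicit ``consecutive-blocks, shift by $\lceil N'/2\rceil$'' construction; your version is thus self-contained, at the cost of a few extra lines of bookkeeping.
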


\begin{proof}
Put $\sigma _{0}\equiv 0$ and $\sigma _{k}\equiv m_{1}+...+m_{k}$ for $%
k=1,...,n$. On the one hand, if $m_{n-1}\leq \sigma _{n-2}$ then in subgraph 
$K_{(m_{1},...,m_{n-1})}$ of $K_{_{\overline{m}}}$ there is a matching with $%
\left\lfloor \frac{1}{2}\sigma _{n-1}\right\rfloor $ edges (see \cite%
{SITTON1996}). Hence, from Proposition \ref{prop:upper_bound} it follows
that $\gamma _{e,g}\left( K_{_{\overline{m}}}\right) \leq 2\left( \left\vert
V\left( K_{_{\overline{m}}}\right) \backslash V_{n}\right\vert -\left\lfloor 
\frac{1}{2}\sigma _{n-1}\right\rfloor \right) -1=2\left\lceil \frac{1}{2}%
\sigma _{n-1}\right\rceil -1$ when $m_{n-1}\leq \sigma _{n-2}$. On the other
hand, if $m_{n-1}>\sigma _{n-2}$ then in subgraph $K_{(m_{1},...,m_{n-1})}$
of $K_{_{\overline{m}}}$ there is a matching with $\sigma _{n-2}$ edges.
Hence, from Proposition \ref{prop:upper_bound} it follows that $\gamma
_{e,g}\left( K_{_{\overline{m}}}\right) \leq 2\left( \left\vert V\left( K_{_{%
\overline{m}}}\right) \backslash V_{n}\right\vert -\sigma _{n-2}\right)
-1=2m_{n-1}-1$ when $m_{n-1}>\sigma _{n-2}$. Thus, $\gamma _{e,g}\left( K_{_{%
\overline{m}}}\right) \leq 2\max \left\{ \left\lceil \frac{1}{2}\sigma
_{n-1}\right\rceil ,m_{n-1}\right\} -1$ in both cases.
\end{proof}

\begin{theorem}
\label{th:main} Let $n\geq 2,\overline{m}\in 
\mathbb{N}
^{n}$ and $m_{1}\leq m_{2}\leq ...\leq m_{n}$. Then 
\begin{equation*}
\gamma _{e,g}(K_{_{\overline{m}}})=\min \left\{ \left\lceil \frac{2}{3}%
\left\vert V(K_{_{\overline{m}}})\right\vert \right\rceil ,\text{ }2\max
\left\{ \left\lceil \frac{1}{2}\overset{n-1}{\underset{j=1}{\sum }}%
m_{j}\right\rceil ,m_{n-1}\right\} \right\} -1.
\end{equation*}
\end{theorem}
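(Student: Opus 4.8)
The plan is to combine the upper and lower bounds already established, dealing separately with the two regimes governed by which term achieves the minimum. First I would invoke Proposition \ref{prop:rest>2}, which gives $\gamma_{e,g}(K_{_{\overline{m}}}) \leq 2\max\{\lceil\frac{1}{2}\sigma_{n-1}\rceil, m_{n-1}\} - 1$ unconditionally, so the right-hand side of the theorem is automatically an upper bound for $\gamma_{e,g}(K_{_{\overline{m}}})$ as soon as I also show $\gamma_{e,g}(K_{_{\overline{m}}}) \leq \lceil\frac{2}{3}|V(K_{_{\overline{m}}})|\rceil - 1$. For the latter I would exhibit an instance $S$ played with a \textit{2-1-}strategy (which exists by Proposition \ref{prop:S_1} and Proposition \ref{prop:diam_1}, since $\mathrm{diam}(K_{_{\overline{m}}})=1$) in which Dominator forces the uncovered set down to at most one vertex; then Lemma \ref{lem:main} yields exactly $\lceil\frac{2}{3}|V(K_{_{\overline{m}}})|\rceil - 1$. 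So the upper bound direction reduces to Dominator having a 2-1-strategy that leaves $\leq 1$ uncovered vertex, and I should note this is consistent with Proposition \ref{prop:S_is_optimal}, which says all such "near-complete" 2-1-instances have the same length.

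Next, for the lower bound I would split into two cases according to the outcome of optimal play. If in the optimal game at most one vertex remains uncovered, then by Lemma \ref{lem:main} (applied to the optimal instance, which by Propositions \ref{prop:D_2} and \ref{prop:S_1} may be taken to be a 2-1-strategy) we get $\gamma_{e,g}(K_{_{\overline{m}}}) = \lceil\frac{2}{3}|V(K_{_{\overline{m}}})|\rceil - 1$, which is trivially $\geq \min\{\ldots\}$. If instead at least two vertices remain uncovered at the end, then I would have Staller deviate to the semi-greedy strategy while keeping Dominator's optimal play, apply Proposition \ref{prop:S_all} to obtain $\gamma_{e,g}(K_{_{\overline{m}}}) \geq |S| \geq 2\max\{\lceil\frac{1}{2}\sigma_{n-1}\rceil, m_{n-1}\} - 1$, which is again $\geq \min\{\ldots\}$. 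The point is that under optimal Dominator play, the number of uncovered vertices at game's end is an intrinsic quantity, so exactly one of the two cases occurs and in each case the value matches the claimed minimum from below; combined with the upper bound this pins down $\gamma_{e,g}(K_{_{\overline{m}}})$.

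The technical care needed is in matching the two cases consistently: I want to argue that when $\lceil\frac{2}{3}|V(K_{_{\overline{m}}})|\rceil - 1 \leq 2\max\{\lceil\frac{1}{2}\sigma_{n-1}\rceil, m_{n-1}\} - 1$ the optimal game ends with $\leq 1$ uncovered vertex (so Lemma \ref{lem:main} applies and gives the smaller value), and in the opposite inequality the optimal game ends with $\geq 2$ uncovered vertices (so Proposition \ref{prop:S_all} applies). The clean way to see this is: the upper bound $\gamma_{e,g} \leq \min\{\ldots\} - 1$ is already proved, so $\gamma_{e,g}$ cannot exceed the smaller of the two quantities; and the two lower-bound cases are exhaustive, so whichever case holds, the lower bound it delivers cannot be the strictly larger quantity, forcing equality with $\min\{\ldots\} - 1$. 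In other words, one does not need to predict in advance which case occurs — the upper bound plus the case-dependent lower bounds squeeze $\gamma_{e,g}$ to exactly $\min\{\ldots\} - 1$ regardless.

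The main obstacle I anticipate is the upper-bound claim that Dominator has a 2-1-strategy driving the uncovered set to $\leq 1$ vertex: Proposition \ref{prop:upper_bound} with $U = \varnothing$ gives $\gamma_{e,g} \leq 2|V| - 2|M| - 1$ but does not by itself guarantee that Dominator can always cover two \emph{new} vertices on every move against an adversarial Staller while still finishing with at most one vertex left over. One needs to argue — perhaps by a direct strategy in which Dominator, on each of his turns, selects an edge joining the two currently-largest uncovered partite classes, exploiting $\mathrm{diam}(K_{_{\overline{m}}})=1$ so that no matter what vertex Staller covered, a legal edge covering two fresh vertices remains available until fewer than two uncovered vertices survive — that this strategy both respects the legality condition (\ref{eq:game_condition}) and terminates with $|V(K_{_{\overline{m}}})| - |C_S| \leq 1$. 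Once that is in hand, Lemma \ref{lem:main} closes the upper bound and the rest is the bookkeeping sketched above.
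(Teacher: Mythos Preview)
Your overall architecture (case split on whether the optimal $2$-$1$-game leaves $\le 1$ or $\ge 2$ uncovered vertices, then Lemma~\ref{lem:main} in the first case and Proposition~\ref{prop:S_all} plus Proposition~\ref{prop:rest>2} in the second) matches the paper's. The gap is precisely where you flag it: your Step~2, the separate upper bound $\gamma_{e,g}(K_{\overline m})\le\lceil\tfrac{2}{3}|V|\rceil-1$ via a Dominator strategy that drives the uncovered set down to at most one vertex.

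This step cannot be carried out as proposed. First, Lemma~\ref{lem:main} requires the $2$-$1$-instance to be \emph{optimal}, so exhibiting a special Dominator strategy (even if it existed) does not let you invoke the lemma. Second, and more decisively, such a strategy need not exist at all: take $K_{1,1,m}$ with $m\ge 5$. Any $2$-$1$-game on this graph ends the moment both singletons $V_1,V_2$ are covered, because the remaining uncovered vertices all lie in the independent set $V_3$. Since every Dominator move covering two new vertices must use the unique uncovered vertex outside $V_3$ once $|V_1\cup V_2\setminus C_{S,i}|=1$, the game terminates after at most three moves with at least $m-3\ge 2$ uncovered vertices, no matter what Dominator does. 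So there is simply no $2$-$1$-instance ending with $\le 1$ uncovered vertex, and your proposed ``pick the edge between the two largest uncovered classes'' strategy cannot help.

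The paper sidesteps this entirely: it never proves $\gamma_{e,g}\le\lceil\tfrac{2}{3}|V|\rceil-1$ as a stand-alone bound. In the $\ge 2$-uncovered case it first pins down $\gamma_{e,g}=|S'|=2\max\{\lceil\tfrac12\sigma_{n-1}\rceil,m_{n-1}\}-1$ from the squeeze (Propositions~\ref{prop:S_all} and~\ref{prop:rest>2}), and then observes that the semi-greedy instance $S'$ has $|S'|$ odd (Claim~\ref{claim:S_odd}), so $\tfrac{3}{2}(|S'|-1)+2=|C_{S'}|\le |V|-2$, which rearranges to $|S'|\le\lceil\tfrac{2}{3}|V|\rceil-1$. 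This gives $B\le A$ \emph{a posteriori} in that case, completing the identification with $\min\{A,B\}-1$. Replacing your Step~2 by this counting argument fixes the proof.
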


\begin{proof}
Let $S$ be an instance of $Game$ $\mathfrak{D}_{e}$ played on graph $K_{_{%
\overline{m}}}$ with an optimal \textit{2-1-}strategy.

\setcounter{case}{0}

\begin{case}
$\left\vert V(K_{_{\overline{m}}})\backslash C_{S}\right\vert \geq 2$.

Since $diam\left( K_{_{\overline{m}}}\right) =1$, there exists an instance $%
S^{\prime }$ of $Game$ $\mathfrak{D}_{e}$ played on graph $K_{_{\overline{m}%
}}$ with a \textit{2-1-}strategy such that Dominator plays with an optimal
strategy and Staller plays with a semi-greedy strategy. From Proposition \ref%
{prop:S_ineq} it follows that $|V(K_{_{\overline{m}}})\backslash
C_{S^{\prime }}|\geq |V(K_{_{\overline{m}}})\backslash C_{S}|\geq 2$. Hence,
from Propositions \ref{prop:S_all} and \ref{prop:rest>2} follows that 
\begin{equation*}
\gamma _{e,g}\left( K_{_{\overline{m}}}\right) =2\max \left\{ \left\lceil 
\frac{1}{2}\overset{n-1}{\underset{j=1}{\sum }}m_{j}\right\rceil
,m_{n-1}\right\} -1.
\end{equation*}%
Since $|S^{\prime }|$ is odd (see Claim \ref{claim:S_odd} inside Proposition %
\ref{prop:S_all}), $\frac{3}{2}(|S^{\prime }|-1)+2=\left\vert C_{S^{\prime
}}\right\vert \leq \left\vert V(K_{_{\overline{m}%
}})\right\vert -2$. So, $\gamma _{e,g}\left( K_{_{\overline{m}}}\right)
=|S^{\prime }|\leq \left\lceil \frac{2}{3}\left\vert V(K_{_{\overline{m}%
}})\right\vert \right\rceil -1$.
\end{case}

\begin{case}
$\left\vert V(K_{_{\overline{m}}})\backslash C_{S}\right\vert \leq 1$.

From Lemma \ref{lem:main} follows that $\gamma _{e,g}\left( K_{_{\overline{m}%
}}\right) =\left\lceil \frac{2}{3}\left\vert V(K_{_{\overline{m}%
}})\right\vert \right\rceil -1$. Hence, from Proposition \ref{prop:S_all}
follows that 
\begin{equation*}
\left\lceil \frac{2}{3}\left\vert V(K_{_{\overline{m}}})\right\vert
\right\rceil -1=\gamma _{e,g}\left( K_{_{\overline{m}}}\right) \leq 2\max
\left\{ \left\lceil \frac{1}{2}\overset{n-1}{\underset{j=1}{\sum }}%
m_{j}\right\rceil ,m_{n-1}\right\} -1.
\end{equation*}
\end{case}

Thus, the proof is completed.
\end{proof}

\begin{corollary}
\label{prop:semi-greedy_is_optimal} Let $S$ be an instance of $Game$ $%
\mathfrak{D}_{e}$ played on graph $K_{_{\overline{m}}}$ ($n\geq 2$) with 
\textit{2-1-}strategy such that (a) Dominator plays with an optimal strategy
and (b) Staller plays with a semi-greedy strategy. Then $|S|=\gamma
_{e,g}\left( K_{_{\overline{m}}}\right) $.
\end{corollary}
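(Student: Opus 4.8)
The plan is to split into the same two cases used in the proof of Theorem \ref{th:main}, according to whether the final number of uncovered vertices $|V(K_{\overline{m}})\backslash C_S|$ is at least $2$ or at most $1$, and in each case use the already-established bounds to pin down $|S|$ exactly. First I would observe that since $\mathrm{diam}(K_{\overline{m}})=1$ (Proposition \ref{prop:diam_1}), Staller really can play a semi-greedy strategy, so such an instance $S$ exists; let $S^{\ast}$ denote an instance of $Game~\mathfrak{D}_{e}$ on $K_{\overline{m}}$ played with an optimal \textit{2-1-}strategy (whose existence is guaranteed by Proposition \ref{prop:S_1}). The goal is to show $|S|=|S^{\ast}|=\gamma_{e,g}(K_{\overline{m}})$.

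In Case 1, where $|V(K_{\overline{m}})\backslash C_S|\ge 2$: here $S$ satisfies the hypotheses of Proposition \ref{prop:S_all} (Dominator optimal, Staller semi-greedy, at least two uncovered vertices at the end), so $\gamma_{e,g}(K_{\overline{m}})\ge |S|\ge 2\max\{\lceil\frac12\sum_{j=1}^{n-1}m_j\rceil,\,m_{n-1}\}-1$. Combined with the upper bound of Proposition \ref{prop:rest>2}, this forces $|S|=2\max\{\lceil\frac12\sum_{j=1}^{n-1}m_j\rceil,\,m_{n-1}\}-1$, and Theorem \ref{th:main} together with the displayed inequality $\gamma_{e,g}(K_{\overline{m}})\le\lceil\frac23|V(K_{\overline{m}})|\rceil-1$ established inside that proof shows this equals $\gamma_{e,g}(K_{\overline{m}})$.

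In Case 2, where $|V(K_{\overline{m}})\backslash C_S|\le 1$: here I would note that Dominator plays optimally in $S$, $S$ is a \textit{2-1-}strategy instance with at most one uncovered vertex at the end, and by Proposition \ref{prop:S_1} there is an optimal \textit{2-1-}instance $S^{\ast}$ with $|V(K_{\overline{m}})\backslash C_{S^{\ast}}|\le 1$ as well (by Proposition \ref{prop:indep}, since the game ends only when the uncovered set is independent, i.e. lies in a single partite class, but here with $\mathrm{diam}=1$ the argument of Theorem \ref{cor:K_n} gives at most one uncovered vertex for the optimal instance). Then Proposition \ref{prop:S_is_optimal} applies directly to $S$ and $S^{\ast}$ and yields $|S|=|S^{\ast}|=\gamma_{e,g}(K_{\overline{m}})$.

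The main obstacle I anticipate is Case 2: Proposition \ref{prop:S_is_optimal} requires \emph{both} instances to leave at most one uncovered vertex, so I must be careful to verify that an optimal \textit{2-1-}instance on $K_{\overline{m}}$ genuinely has this property — this is exactly what the proof of Theorem \ref{cor:K_n} shows via Proposition \ref{prop:indep} (when the game ends, the uncovered set is independent, hence contained in one $V_i$, and for the optimal instance one checks at most one vertex survives). Once that is in hand, the rest is a direct assembly of Propositions \ref{prop:S_all}, \ref{prop:rest>2}, \ref{prop:S_is_optimal} and Theorem \ref{th:main}, with no new calculation required.
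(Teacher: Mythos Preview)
Your Case~1 is correct and matches the paper. The gap is in Case~2. You need the optimal \textit{2-1-}instance $S^{\ast}$ to satisfy $|V(K_{\overline{m}})\backslash C_{S^{\ast}}|\le 1$, and you propose to obtain this by ``the argument of Theorem~\ref{cor:K_n}''. But that argument works for $K_m$ only because every independent set in $K_m$ has size at most~$1$: Proposition~\ref{prop:indep} says the uncovered set at the end of the game is independent, which in $K_m$ forces it to have at most one element. In $K_{\overline{m}}$ an independent set can have size $m_n$, so Proposition~\ref{prop:indep} only tells you that $V(K_{\overline{m}})\backslash C_{S^{\ast}}$ lies in a single partite class, not that it has cardinality at most~$1$. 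The condition $\mathrm{diam}=1$ is used in Theorem~\ref{cor:K_n} solely to invoke Proposition~\ref{prop:S_1} and produce a \textit{2-1-}instance; it does not bound the number of uncovered vertices. Indeed, nothing rules out a priori that an optimal instance on $K_{\overline{m}}$ leaves two or more uncovered vertices --- that is precisely the situation in Case~1 of Theorem~\ref{th:main}.

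The paper closes this gap differently and more directly: it applies Proposition~\ref{prop:S_ineq} with the optimal instance $S^{\ast}$ in the role of~$S$ there, and your semi-greedy instance $S$ (in which Dominator plays optimally) in the role of~$S'$ there. This gives $|C_S|\le |C_{S^{\ast}}|$, hence $|V(K_{\overline{m}})\backslash C_{S^{\ast}}|\le |V(K_{\overline{m}})\backslash C_S|\le 1$, and then Proposition~\ref{prop:S_is_optimal} applies to $S$ and $S^{\ast}$ exactly as you intended. Replace your appeal to Theorem~\ref{cor:K_n} with this use of Proposition~\ref{prop:S_ineq} and the proof goes through.
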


\begin{proof}
From Propositions \ref{prop:S_all} and \ref{prop:rest>2} it follows that $%
|S|=\gamma _{e,g}\left( K_{_{\overline{m}}}\right) $ when $\left\vert V(K_{_{%
\overline{m}}})\backslash C_{S}\right\vert \geq 2$. Let $S^{\prime }$ be an
instance of $Game$ $\mathfrak{D}_{e}$ played on graph $K_{_{\overline{m}}}$
with an optimal \textit{2-1-}strategy. Hence, from Proposition \ref%
{prop:S_is_optimal} it follows that $|S|=|S^{\prime }|=\gamma _{e,g}\left(
K_{_{\overline{m}}}\right) $ when $\left\vert V(K_{_{\overline{m}%
}})\backslash C_{S}\right\vert \leq 1$, as $\left\vert V(K_{_{\overline{m}%
}})\backslash C_{S^{\prime }}\right\vert \leq \left\vert V(K_{_{\overline{m}%
}})\backslash C_{S}\right\vert $ due to Proposition \ref{prop:S_ineq}. Thus,
semi-greedy strategy is an optimal strategy for Staller for $Game$ $%
\mathfrak{D}_{e}$ played on graph $K_{_{\overline{m}}}$.
\end{proof}

\begin{example}
Since Theorem \ref{th:main}, $\gamma _{e,g}(K_{_{2,2,6,6}})=10$, $\gamma
_{e,g}(K_{_{2,2,4,5}})=7$ and $\gamma _{e,g}(K_{_{1,2,5,5}})=8$. Hence, just
"greedy" strategy for Staller, when at each step Staller choose edge to
cover vertex from some maximum independent set, is not optimal and it is
expedient to use semi-greedy strategy instead.
\end{example}

\begin{corollary}
\label{prop:greedy_is_optimal_for_Dominator} Let $n\geq 2$, $\overline{m}\in 
\mathbb{N}
^{n}$, $m_{1}\leq m_{2}\leq ...\leq m_{n}$, let $M$ be a maximal matching in
induced subgraph $K_{_{\overline{m}}}[V(K_{_{\overline{m}}})\backslash
V_{n}] $ and let $S=s_{1}...s_{|S|}$ be an instance of $Game$ $\mathfrak{D}%
_{e}$ played on graph $K_{_{\overline{m}}}$ with \textit{2-1-}strategy such
that Staller plays with an optimal strategy. If Dominator at each step $i$ ($%
i\leq |S|$) chooses edge from $M$ when $M\backslash \left\{
s_{1},...,s_{i-1}\right\} \neq \varnothing $ then $|S|=\gamma _{e,g}\left(
K_{_{\overline{m}}}\right) $.
\end{corollary}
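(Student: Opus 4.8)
The plan is to show that the described greedy/matching strategy for Dominator is optimal, i.e. it forces the game to end in at most $\gamma_{e,g}(K_{\overline m})$ moves against any Staller play. The natural split is by how many uncovered vertices remain at the end, mirroring the proof of Theorem~\ref{th:main}. Let $S$ be the instance produced by this strategy, and recall from Theorem~\ref{th:main} that $\gamma_{e,g}(K_{\overline m})=\min\{\lceil\frac23|V(K_{\overline m})|\rceil,\ 2\max\{\lceil\frac12\sigma_{n-1}\rceil,m_{n-1}\}\}-1$. I would first observe that since the strategy uses a \textit{2-1-}strategy, Proposition~\ref{prop:S_is_optimal} already handles the case $|V(K_{\overline m})\setminus C_S|\le 1$: by Lemma~\ref{lem:main} (or directly Proposition~\ref{prop:S_is_optimal} comparing with an optimal \textit{2-1-}instance $S^{\ast}$), we get $|S|=\gamma_{e,g}(K_{\overline m})$.

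The substantive case is $|V(K_{\overline m})\setminus C_S|\ge 2$. Here I would argue that the quantity of interest is $|V(K_{\overline m})\setminus C_S|$ versus $|V(K_{\overline m})\setminus V_n|$. First, Staller plays optimally, so $|S|\le|S^{\ast}|$ for any optimal \textit{2-1-}instance $S^{\ast}$; combined with the matching-based upper bound argument behind Proposition~\ref{prop:rest>2}, the point is to show $|S|\ge 2\max\{\lceil\frac12\sigma_{n-1}\rceil,m_{n-1}\}-1$ as well, hence equality with $\gamma_{e,g}$. The key structural claim is that under Dominator's prescribed strategy, as long as the matching $M$ is not exhausted, Dominator covers two \emph{new} vertices both lying in $V(K_{\overline m})\setminus V_n$; moreover the edges of $M$ together cover all of $V(K_{\overline m})\setminus V_n$ except possibly one vertex (since $M$ is a \emph{maximal} — hence, in a complete multipartite graph on $\ge 2$ parts with at least two nonempty parts among $V_1,\dots,V_{n-1}$, maximum — matching of $K_{\overline m}[V(K_{\overline m})\setminus V_n]$, of size $\min\{\lfloor\frac12\sigma_{n-1}\rfloor,\sigma_{n-2}\}$). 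Therefore playing out all of $M$ takes $|M|$ Dominator-moves, interleaved with $|M|-1$ or $|M|$ Staller moves; after $M$ is used up, at most one vertex of $V(K_{\overline m})\setminus V_n$ and some subset of $V_n$ remain uncovered, and from that position Dominator finishes by Proposition~\ref{prop:indep}. Counting moves in the \textit{2-1-}regime then gives $|S| = 2\max\{\lceil\frac12\sigma_{n-1}\rceil,m_{n-1}\}-1$ exactly, matching Proposition~\ref{prop:rest>2}; since by Proposition~\ref{prop:S_all} this is also a lower bound whenever $\ge 2$ vertices remain uncovered (and in this case they do), we conclude $|S|=\gamma_{e,g}(K_{\overline m})$.

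To assemble this cleanly I would: (1) invoke the \textit{2-1-}hypothesis so that $|C_{S,i}|$ is determined by $i$; (2) show the uncovered set always sits inside a single partite class once $M$ is exhausted, using that $M$ is maximum in $K_{\overline m}[V(K_{\overline m})\setminus V_n]$; (3) do the move-count: $|M|$ Dominator moves on $M$, then $\le \max\{\lceil\frac12(\sigma_{n-1}-2|M|)\rceil,\ m_{n-1}-|M|\}$ further Dominator moves plus the interleaved Staller moves, and check the total equals $2\max\{\lceil\frac12\sigma_{n-1}\rceil,m_{n-1}\}-1$; (4) combine with Propositions~\ref{prop:S_all} and~\ref{prop:rest>2} in the $\ge 2$ case and with Proposition~\ref{prop:S_is_optimal} in the $\le 1$ case.

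The main obstacle I anticipate is case~(2)–(3): verifying that Dominator's "use $M$ first" rule genuinely keeps every Dominator move covering two fresh vertices of $V(K_{\overline m})\setminus V_n$ regardless of how Staller scatters her single-vertex moves — in particular that Staller cannot, by pre-covering one endpoint of several $M$-edges, force Dominator off $M$ prematurely or make an $M$-edge cover only one new vertex. This needs the observation that even if Staller touches an endpoint of an $M$-edge, the edge still covers (at least) one new vertex, so it remains a legal move, and the bookkeeping must be done in terms of $|C_{S,i}\setminus C_{S,i-1}|$ rather than which specific vertices are hit; plus a short argument (as in Proposition~\ref{prop:rest>2}, citing \cite{SITTON1996}) that the maximal matching $M$ has the claimed size and leaves at most one vertex of $V(K_{\overline m})\setminus V_n$ uncovered. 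The edge cases $\sigma_{n-2}=0$ (i.e. only $V_{n-1}$ nonempty among the first $n-1$ classes) and tiny parts should be checked separately but are routine.
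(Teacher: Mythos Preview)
Your case split is on the wrong instance. You split according to $|V(K_{\overline m})\setminus C_S|$, where $S$ is the game in which Dominator uses the matching strategy and Staller plays optimally; the paper instead introduces an \emph{optimal} \textit{2-1-}instance $S'$ and splits on $|V(K_{\overline m})\setminus C_{S'}|$. The direction of Proposition~\ref{prop:S_ineq} matters here: since Staller is optimal in $S$ while Dominator may be suboptimal, one has $|S|\ge |S'|$ and hence $|C_S|\ge |C_{S'}|$, i.e.\ $|V\setminus C_S|\le |V\setminus C_{S'}|$. Consequently, from $|V\setminus C_S|\le 1$ you \emph{cannot} conclude $|V\setminus C_{S'}|\le 1$, and Proposition~\ref{prop:S_is_optimal} (which needs \emph{both} instances to leave at most one vertex uncovered) does not apply. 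Likewise Lemma~\ref{lem:main} requires an \emph{optimal} \textit{2-1-}instance with $\le 1$ uncovered vertex, which $S$ is not known to be. The paper avoids this by starting from $|V\setminus C_{S'}|\le 1$, deducing $|V\setminus C_S|\le 1$ via the inequality above, and only then invoking Proposition~\ref{prop:S_is_optimal}.

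Your treatment of the case $|V\setminus C_S|\ge 2$ is also more laborious than necessary and mis-cites Proposition~\ref{prop:S_all}: that proposition assumes Dominator is optimal and Staller plays the semi-greedy strategy, neither of which holds for $S$. The paper's route is shorter: once one knows $|V\setminus C_{S'}|\ge 2$, Propositions~\ref{prop:S_all} and~\ref{prop:rest>2} (via Theorem~\ref{th:main}) give $\gamma_{e,g}(K_{\overline m})=2\max\{\lceil\tfrac12\sigma_{n-1}\rceil,m_{n-1}\}-1$, which is exactly the upper bound achieved by the matching-based Dominator strategy (this is the content behind Proposition~\ref{prop:rest>2}); combined with $|S|\ge \gamma_{e,g}$ from Staller's optimality, one gets $|S|=\gamma_{e,g}$ without any step-by-step move counting. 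Your detailed bookkeeping in (2)--(3) is therefore unnecessary, and the obstacle you flag (Staller pre-covering endpoints of $M$) is precisely what makes that counting fragile, especially since $M$ is only assumed maximal.
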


\begin{proof}
Let $S^{\prime }$ be an instance of $Game$ $\mathfrak{D}_{e}$ played on
graph $K_{_{\overline{m}}}$ with an optimal \textit{2-1-}strategy. From
Propositions \ref{prop:S_all} and \ref{prop:rest>2} it follows that $%
|S^{\prime }|=|S|=\gamma _{e,g}\left( K_{_{\overline{m}}}\right) $ when $\left\vert V(K_{_{%
\overline{m}}})\backslash C_{S^{\prime }}\right\vert \geq 2$. On the other
hand, from Proposition \ref{prop:S_is_optimal} it follows that $%
|S|=|S^{\prime }|=\gamma _{e,g}\left( K_{_{\overline{m}}}\right) $ when $%
\left\vert V(K_{_{\overline{m}}})\backslash C_{S^{\prime }}\right\vert \leq
1 $, as $\left\vert V(K_{_{\overline{m}}})\backslash C_{S}\right\vert \leq
\left\vert V(K_{_{\overline{m}}})\backslash C_{S^{\prime }}\right\vert $ due
to Proposition \ref{prop:S_ineq}. 
\end{proof}

\section{Staller-start domination game played on $L(K_{\overline{m}})$}

\label{sec:K_m_vector_Staller-start}

\begin{proposition}
\label{prop:Staller_start} Let $n\geq 3$, $\overline{m}\in 
\mathbb{N}
^{n}$ and $m_{1}\leq m_{2}\leq ...\leq m_{n}$. Then%
\begin{equation}
\gamma _{e,g}^{\prime }(K_{_{\overline{m}}})=\min \left\{ \left\lceil \frac{2%
}{3}\left( \left\vert V(K_{_{\overline{m}}})\right\vert -2\right)
\right\rceil ,\text{ }2\max \left\{ \left\lceil \frac{1}{2}\left( \overset{%
n-1}{\underset{j=1}{\sum }}m_{j}-1-\eta \right) \right\rceil ,\text{ }%
m_{n-1}-\mu \right\} \right\} ,  \label{eq:Staller_started}
\end{equation}%
where $\mu $ equals $1$ when $n=3$ and $m_{n-1}=m_{n}$, and otherwise $\mu $
equals $0$; and $\eta $ equals $1$ when $m_{n-1}=m_{n}$, and otherwise $\eta 
$ equals $0$.
\end{proposition}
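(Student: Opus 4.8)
The plan is to mimic the structure of the proof of Theorem~\ref{th:main}, namely to split into the two cases according to whether at the end of an optimal Staller-start game at least two uncovered vertices remain or at most one does, and to reduce the Staller-start game to a Dominator-start game on a smaller partially dominated graph. The key observation is that a Staller-start game is a Dominator-start game with one extra Staller move prepended: Staller, playing optimally, will choose her first edge so as to dominate as few new edges as possible while still helping herself, and a natural optimal first move is to pick an edge inside the union $V_{n-1}\cup V_n$ of the two largest classes (or, when $n=3$ and $m_{n-1}=m_n$, an edge joining $V_{n-1}$ and $V_n$), covering exactly one new vertex in the ``right'' place. After this first move the remaining game is a Dominator-start edge domination game on $K_{\overline m}$ with that edge already dominated; equivalently, since $diam(K_{\overline m})=1$ and by the Continuation Principle, it behaves like $Game~\mathfrak{D}_e$ on the partially dominated graph, and one can re-run Propositions~\ref{prop:S_all} and~\ref{prop:rest>2} with the class sizes effectively reduced. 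The parameters $\mu$ and $\eta$ are exactly the bookkeeping for how Staller's first covered vertex shrinks the relevant quantities $m_{n-1}$ and $\sum_{j=1}^{n-1}m_j$: when $m_{n-1}=m_n$ Staller can afford to spend her opening move decreasing $m_{n-1}$ by one (hence $\eta=1$, and $\mu=1$ in the delicate subcase $n=3$ where there is no third ``reservoir'' class to retreat to), otherwise she must keep the largest class intact and $\mu=\eta=0$.

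The concrete steps, in order, are: (1) \emph{Upper bound.} Exhibit Dominator's strategy: Dominator ignores Staller's first move and then plays the maximal-matching strategy of Corollary~\ref{prop:greedy_is_optimal_for_Dominator} on $K_{\overline m}$; counting as in Proposition~\ref{prop:rest>2} but with $|V(K_{\overline m})\setminus V_n|$ reduced by the (at most one) vertex Staller covered outside $V_n$, together with the generic bound $\lceil\frac23(|V(K_{\overline m})|-2)\rceil$ coming from Lemma~\ref{lem:main}-type counting when few vertices remain uncovered, gives ``$\le$'' in~(\ref{eq:Staller_started}). (2) \emph{Lower bound when $\ge 2$ uncovered vertices remain.} Let Staller play her semi-greedy first move as described, then continue with the semi-greedy strategy; re-derive the analogue of Proposition~\ref{prop:S_all} with the shifted parameters, using the same three-case contradiction argument inside the analogue of Claim~\ref{claim:S_all_from_l}. (3) \emph{Lower bound when $\le 1$ uncovered vertex remains.} Here use a Lemma~\ref{lem:main}-type count: a Staller-start 2-1 game has its last (Staller) move covering one vertex, and the parity/counting identities force $|S|\ge\lceil\frac23(|V|-2)\rceil$; then take the minimum. (4) Combine the matching upper and lower bounds exactly as at the end of the proof of Theorem~\ref{th:main}, taking cases on which term in the outer $\min$ and which term in the inner $\max$ is active, and check the small residual cases $n=3,\ m_{n-1}=m_n$ by hand.

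The main obstacle I expect is step~(2): verifying that Staller's semi-greedy strategy remains optimal after the forced opening move, i.e.\ that the opening move really does leave the ``worst'' partially dominated position for Dominator. One has to re-prove that the uncovered set at the end lies in a single class $V_l$ with $|V_l|=|V_n|$ (or, when $\eta=1$, effectively $|V_n|-1$ because one vertex of the largest class was consumed), and the three-case analysis of Claim~\ref{claim:S_all_from_l} must be checked to go through verbatim with $p$ now allowed to be odd (Staller moved first), which slightly changes the parity bookkeeping $f=\tfrac12(|S|-1-p)$ and the count of Dominator's versus Staller's remaining moves. A secondary subtlety, and the reason the hypothesis is $n\ge 3$ here versus $n\ge 4$ in the abstract's clean closed form, is the interaction of the two arguments in the outer $\min$ when $n$ is small: for $n=3$ the matching-based term and the $\frac23$-term can coincide in ways that need the $\mu$ correction, and the proposition is stated only for $n\ge 3$ precisely to sidestep the $n=2$ pathologies where $K_{\overline m}$ is complete bipartite and the analysis of Section~\ref{sec:K_m} degenerates.
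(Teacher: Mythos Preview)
Your plan diverges sharply from the paper's argument, and it is built on a confusion about what Staller's first move does.

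The paper's proof is a two-line reduction.  Observe that Staller's opening edge necessarily joins two distinct partite classes $V_r,V_t$ and therefore covers \emph{two} new vertices (not one, as you repeatedly write).  After that move the set of dominated edges is exactly $N[s_1]$, and the undominated edges are precisely the edges of the induced subgraph on $V(K_{\overline m})\setminus\{u,v\}$, which is the complete multipartite graph $K_{\overline m^{(r)(t)}}$ with parameter vector $\overline m^{(r)(t)}=(m_1,\dots,m_r-1,\dots,m_t-1,\dots,m_n)$.  Because the only data that matters in the edge game on a complete multipartite graph is which vertices are covered, the remaining Dominator-start game is literally $Game~\mathfrak D_e$ on $K_{\overline m^{(r)(t)}}$.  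Hence
\[
\gamma'_{e,g}(K_{\overline m})\;=\;1+\max_{1\le r<t\le n}\gamma_{e,g}\bigl(K_{\overline m^{(r)(t)}}\bigr),
\]
and one simply plugs in Theorem~\ref{th:main} for each $\overline m^{(r)(t)}$.  The elementary identity
\[
\max_{z}\min\{a,\max\{b_z,c_z\}\}=\min\Bigl\{a,\max\bigl\{\max_z b_z,\max_z c_z\bigr\}\Bigr\}
\]
then lets one maximise the two inner terms over $(r,t)$ independently, and the resulting optima are exactly what produce the constants $\eta$ and $\mu$.  No part of Section~\ref{sec:K_m_vector} needs to be re-proved.

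Your proposal, by contrast, tries to rerun the entire machinery of Propositions~\ref{prop:S_all}--\ref{prop:rest>2} and Lemma~\ref{lem:main} from scratch in the Staller-start setting, including a parity-shifted version of Claim~\ref{claim:S_all_from_l}.  Besides being far more work than necessary, several of your stated steps are incorrect as written: Staller's first move cannot ``cover exactly one new vertex'' (there are no previously covered vertices), there are no edges ``inside'' a partite class so your description of the opening move is ill-posed, and your interpretation of $\eta,\mu$ as ``decreasing $m_{n-1}$ by one'' misses that the opening move decreases \emph{two} class sizes simultaneously---which is precisely why the bookkeeping for $\eta$ and $\mu$ comes out of maximising over pairs $(r,t)$ rather than over a single index.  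The cleaner route is to recognise the reduction to Theorem~\ref{th:main} and avoid all of steps~(2)--(4) in your plan.
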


\begin{proof}
For $r=1,...,n$ put $\overline{m}^{\left( r\right) }$ $\equiv \left(
m_{1},...,m_{r-1},m_{r}-1,m_{r+1},...m_{n}\right) $. Since $\gamma
_{e,g}^{\prime }(K_{_{\overline{m}}})=\underset{1\leq r<t\leq n}{\max }%
\left\{ \gamma _{e,g}(K_{_{\overline{m}^{\left( r\right) \left( t\right)
}}})\right\} +1$, from Theorem \ref{th:main} and from the equality%
\begin{equation*}
\underset{1\leq z\leq q}{\max }\left\{ \min \left\{ a,\max
\{b_{z},c_{z}\right\} \}\right\} =\min \left\{ a,\max \left\{ \underset{%
1\leq z\leq q}{\max }\{b_{z}\},\underset{1\leq z\leq q}{\max }%
\{c\,_{z}\}\right\} \right\} \qquad \forall q\in 
\mathbb{N}
;a,b_{1},...,b_{q},c_{1},...,c_{q}\in 
\mathbb{R}
;
\end{equation*}%
it follows that (\ref{eq:Staller_started}) holds.
\end{proof}

\begin{theorem}
\label{th:main_S_start} Let $n\geq 4$, $\overline{m}\in 
\mathbb{N}
^{n}$ and $m_{1}\leq m_{2}\leq ...\leq m_{n}$. Then%
\begin{equation}
\gamma _{e,g}^{\prime }(K_{_{\overline{m}}})=\min \left\{ \left\lceil \frac{2%
}{3}\left( \left\vert V(K_{_{\overline{m}}})\right\vert -2\right)
\right\rceil ,\text{ }2\max \left\{ \left\lceil \frac{1}{2}\left( \overset{%
n-1}{\underset{j=1}{\sum }}m_{j}-1\right) \right\rceil ,\text{ }%
m_{n-1}\right\} \right\} .  \label{eq:main_S_start}
\end{equation}
\end{theorem}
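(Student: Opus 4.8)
The plan is to deduce Theorem \ref{th:main_S_start} from Proposition \ref{prop:Staller_start} by showing that the extra correction terms $\mu$ and $\eta$ appearing in (\ref{eq:Staller_started}) can be dropped when $n\geq 4$. Recall that $\mu=0$ whenever $n\geq 4$, so the only thing to handle is the term $\eta$, which equals $1$ precisely when $m_{n-1}=m_n$. Thus (\ref{eq:main_S_start}) and (\ref{eq:Staller_started}) already literally coincide unless $m_{n-1}=m_n$, and in that remaining case I must show
\[
\min\left\{\left\lceil\tfrac23(|V|-2)\right\rceil,\ 2\max\left\{\left\lceil\tfrac12(\sigma_{n-1}-2)\right\rceil,\ m_{n-1}\right\}\right\}
=\min\left\{\left\lceil\tfrac23(|V|-2)\right\rceil,\ 2\max\left\{\left\lceil\tfrac12(\sigma_{n-1}-1)\right\rceil,\ m_{n-1}\right\}\right\},
\]
writing $\sigma_{n-1}=m_1+\dots+m_{n-1}$ and $|V|=\sigma_{n-1}+m_n$.

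First I would observe that the two right-hand expressions inside the outer $\min$ can differ by at most $2$, and only when $\sigma_{n-1}$ is even and the second argument $\lceil\tfrac12(\sigma_{n-1}-1)\rceil=\tfrac{\sigma_{n-1}}{2}$ strictly dominates $m_{n-1}$; if $m_{n-1}$ is the maximizer in both, the two sides are equal trivially. So I reduce to the subcase $\sigma_{n-1}$ even and $\tfrac{\sigma_{n-1}}{2}>m_{n-1}=m_n$. In that subcase the second argument of the outer $\min$ on the "$\eta=1$" side is $2\cdot\tfrac{\sigma_{n-1}}{2}-1+1=\sigma_{n-1}$ — wait, more carefully, $2\lceil\tfrac12(\sigma_{n-1}-1)\rceil=2\cdot\tfrac{\sigma_{n-1}}{2}=\sigma_{n-1}$, while on the "$\eta=0$" side $2\lceil\tfrac12(\sigma_{n-1}-2)\rceil=2\cdot\tfrac{\sigma_{n-1}-2}{2}=\sigma_{n-1}-2$. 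The claim then is that the first argument $\lceil\tfrac23(|V|-2)\rceil=\lceil\tfrac23(\sigma_{n-1}+m_n-2)\rceil$ is $\leq\sigma_{n-1}-2$, so that the outer $\min$ selects the common first argument on both sides and the discrepancy is invisible.

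The key step, therefore, is the inequality $\lceil\tfrac23(\sigma_{n-1}+m_n-2)\rceil\leq\sigma_{n-1}-2$ under the standing hypotheses $n\geq 4$, $m_1\leq\dots\leq m_n$, $m_{n-1}=m_n$, and $\tfrac{\sigma_{n-1}}{2}>m_n$. Dropping ceilings, it suffices to show $\tfrac23(\sigma_{n-1}+m_n-2)\leq\sigma_{n-1}-2$ with a little room, i.e. $2m_n\leq\sigma_{n-1}-2+\tfrac23\cdot\text{(slack)}$; rearranging, $2m_n+2\leq\sigma_{n-1}$ roughly, which I would get from $n\geq 4$: since there are at least three parts $m_1,\dots,m_{n-1}$ other than none—actually at least $n-1\geq 3$ parts summing to $\sigma_{n-1}$, and $m_{n-1}=m_n$ is the largest among them, we have $\sigma_{n-1}\geq m_{n-1}+m_{n-2}+m_{n-3}\geq\cdots$, and combined with $\sigma_{n-1}>2m_n=2m_{n-1}$ this forces $\sigma_{n-1}$ to be large enough relative to $m_n$. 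The careful version is an integer-arithmetic case check on $\sigma_{n-1}\bmod 3$ and on how much $\sigma_{n-1}$ exceeds $2m_n$; I expect this bookkeeping with the ceiling function to be the main obstacle, since the boundary cases (e.g. $\sigma_{n-1}=2m_n+2$) are exactly where the inequality is tight. Once this inequality is established, both sides of the displayed equation collapse to $\lceil\tfrac23(|V|-2)\rceil$ in the troublesome subcase and agree termwise otherwise, which completes the proof; I would also double-check the edge case where $\sigma_{n-1}-2$ could be $0$ or negative to confirm it does not arise when $n\geq 4$ and $\tfrac{\sigma_{n-1}}{2}>m_n\geq 1$.
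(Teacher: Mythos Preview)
Your plan matches the paper's: reduce via Proposition~\ref{prop:Staller_start} (noting $\mu=0$ for $n\geq4$) to the case $m_{n-1}=m_n$, and there show that the term $\left\lceil\tfrac{2}{3}(|V|-2)\right\rceil$ is the smaller argument of the outer $\min$ in both versions, so the $\eta$-correction is invisible. Your anticipated obstacle does not materialize: in your critical subcase you already have $\sigma_{n-1}$ even and $\sigma_{n-1}>2m_n$, hence $\sigma_{n-1}\geq 2m_n+2$, and this is \emph{exactly} the rearrangement of $\tfrac{2}{3}(\sigma_{n-1}+m_n-2)\leq\sigma_{n-1}-2$ (with integer right side, so the ceiling passes through); no $\bmod\,3$ case analysis and no further appeal to $n\geq4$ is needed.
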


\begin{proof}
From Proposition \ref{prop:Staller_start} it immediately follows that (\ref%
{eq:main_S_start}) holds when $m_{n-1}\neq m_{n}$. Put $\sigma _{k}\equiv
m_{1}+...+m_{k}$ for $k=1,...,n$. From $m_{n-1}=m_{n}$ it follows that (a)
if $\frac{1}{2}\left( \sigma _{n-1}-2\right) \leq m_{n-1}$ then $\frac{2}{3}%
\left( \sigma _{n-1}+m_{n}-2\right) \leq \frac{2}{3}\left(
2m_{n-1}+2+m_{n}-2\right) =2m_{n-1}$ and (b) if $\frac{1}{2}\left( \sigma
_{n-1}-2\right) >m_{n-1}$ then $\frac{2}{3}\left( \sigma
_{n-1}+m_{n}-2\right) <\frac{2}{3}\left( \sigma _{n-1}+\frac{1}{2}\left(
\sigma _{n-1}-2\right) -2\right) \leq 2\left\lceil \frac{1}{2}\left( \sigma
_{n-1}-2\right) \right\rceil $. Thus, formulas (\ref{eq:Staller_started})
and (\ref{eq:main_S_start}) are equivalent when $m_{n-1}=m_{n}$.
\end{proof}

\begin{remark}
\label{rem:main_n=2} If $m_{2}\geq 1$ then $\gamma _{e,g}^{\prime
}(K_{1,m_{2}})=1$ and if $m_{2}\geq m_{1}\geq 2$ then $\gamma _{e,g}^{\prime
}(K_{m_{1},m_{2}})=\gamma _{e,g}(K_{m_{1}-1,m_{2}-1})+1$.
\end{remark}

\begin{remark}
\label{rem:n=3} Since Proposition \ref{prop:Staller_start}, $\gamma
_{e,g}^{\prime }(K_{m_{1},m_{2},m_{3}})=\min \left\{ \left\lceil \frac{2}{3}%
\left( \left\vert V(K_{m_{1},m_{2},m_{3}})\right\vert -2\right) \right\rceil
,\text{ }2(m_{2}-1)\right\} $ when $m_{1}\leq m_{2}=m_{3}$, and $\gamma
_{e,g}^{\prime }(K_{m_{1},m_{2},m_{3}})=\min \left\{ \left\lceil \frac{2}{3}%
\left( \left\vert V(K_{m_{1},m_{2},m_{3}})\right\vert -2\right) \right\rceil
,\text{ }2m_{2}\right\} $ when $m_{1}\leq m_{2}<m_{3}$.
\end{remark}

\section*{Acknowledgments}

The author would like to thank his colleague Hrant Khachatryan for
introducing him to the domination game sphere.

\end{document}